\newcommand{\reals}{ \ensuremath{ \mathbb{R} }}
\newcommand{\leqr}{ \ensuremath{ := } }
\numberwithin{equation}{section}
\title{On the Hierarchy of 
Scales in
Modeling of
Weakly Interacting Chains
of Atoms
}
\author{
    Dmitry Golovaty\footnotemark[1]
\and
    J. Patrick Wilber\footnotemark[1]}
\begin{document}
\maketitle

\renewcommand{\thefootnote}{\fnsymbol{footnote}}
\footnotetext[1]{
    Department of Mathematics,
    University of Akron,
    Akron,
    OH 44325, USA.}

\begin{abstract}
In the first part of this paper, we apply a well known discrete-to-continuum approach to a Frenkel-Kontorova-type model of an infinitely long one-dimensional chain of atoms weakly interacting with a line of fixed atoms.  The rescaled model contains a small parameter $\delta$ that is the ratio of the strengths of the weak interaction and the elastic interaction.  After replacing discrete displacements with piecewise affine functions to define continuum versions of the discrete energies, we prove that these energies $\Gamma$-converge to a continuum energy as $\delta\rightarrow 0$.  This limiting process represents a transition from the microscale, at which individual atoms are resolved, to a mesoscale with a single diffuse domain wall.  In the second part of this paper, we introduce an additional rescaling $\varepsilon$, and an associated limiting process that converts our problem to the macroscale.  The $\varepsilon$-limiting energy is finite for piecewise constant functions of bounded variation.  In the context of our problem, each point of discontinuity of a minimizer of the limiting energy corresponds to a sharp domain wall.
\end{abstract}

\begin{keywords} supported graphene, moir\'e patterns,
  discrete-to-continuum modeling\end{keywords}

\section{Introduction} 

In recent years, there has been an extensive effort to model registry
effects in suspended graphene, bilayer graphene, and related layered
nanostructures
\cite{carr2020electronic,carr2019exact,carr2018relaxation,cazeaux2020energy,
hott2024incommensurate,massatt2023electronic,zhang2018structural}.
As part of this effort, researchers have sought
basic insight into phenomena like the formation of domain walls,
localized out-of-plane displacements,
and relaxed moire patterns using one-dimensional discrete models
\cite{cazeaux2017analysis,espanol2017discrete,golovaty2008continuum,jingzhi2024formaljustificationcontinuumrelaxation,PhysRevB.96.075311,popov2011commensurate}.
Some of this work applies the framework of the classical
Frenkel-Kontorova theory \cite{braun2013frenkel}, which, for slightly
mismatched one-dimensional lattices, predicts the formation of
relatively large commensurate regions separated by localized
incommensurate regions.
The simplified setting of one-dimensional models facilitates the
application of discrete-to-continuum, or upscaling, 
procedures whose resulting continuum models can yield further insight.

We formulate a model of an infinitely long one-dimensional chain of
atoms weakly interacting with a line of fixed atoms.  Nearest
neighbors on the chain interact by linear springs.  The weak
interaction is minimized at positions on the chain above the midpoints
of the fixed atoms.  This set up gives a Frenkel-Kontorova-type model,
in which the positions of atoms on the chain are determined by a
competition between elastic interactions with nearest neighbors and
the potential energy wells from the weak interaction.  Displacement
boundary conditions are imposed at $\pm \infty$ that preclude the
system from attaining global registry.

In the first part of this paper, we apply a well known
discrete-to-continuum approach (see, for example,
\cite{braides2002gamma,braides2006discrete,scardia2016continuum}) to study our model.  By rescaling, we
introduce a small parameter $\delta$ that is the ratio of the
strengths of the weak interaction and the elastic interaction.
Following \cite{braides2002gamma}, we replace discrete displacements
with piecewise affine functions to define continuum versions of the
discrete energies.  We prove that these energies $\Gamma$-converge to
a continuum energy as $\delta\rightarrow 0$.

The first variation of the limiting continuum energy yields a
boundary-value problem for the diplacement of points on the continuum
chain.  This boundary-value problem implies that a typical minimizer
of the limiting energy is monotone and approaches constant integer
values at $\pm \infty$, where these integers differ by 1.  Hence the
minimizer has a single domain wall between unbounded regions of
registry on the left and on the right.  The domain wall is
spatially diffuse.  Based upon this, we interpret $\delta$ as a length
scale associated with a single domain wall.  At this length scale the
transition appears smooth rather than sharp and the individual
domain walls are infinitely far apart, which is why we see only a
single transition.  The limiting process takes any additional domain
walls of the discrete energies and pushes these to infinity.
We can interpret the limiting process based on 
$\delta$ as rescaling the problem from a microscale, at which one sees
individual atoms, to a mesoscopic scale, at which the atoms have been
homogenized and the chain appears as a continuous curve.  However,
this scale is still relatively small because we see only a single
domain wall.

In the second part of this paper, we introduce an additional rescaling
and an associated limiting process that converts our problem to the
macroscale.
To motivate this, at the start of Section~\ref{s5}
we observe that our new rescaling, 
when applied to the
limiting energy from the $\Gamma$-convergence result of the first part of
the paper,
introduces another small parameter, $\varepsilon$,
and generates a new family of energies having the structure of the
Modica-Mortola energy \cite{modica1987gradient,leoni2013gamma}.
%
The $\varepsilon$-limiting energy is finite for piecewise constant
functions of bounded variation.  In the context of our problem, each point of discontinuity
corresponds to a sharp domain wall.
If we consider more general displacement boundary conditions, for
which the limiting values of the displacement at $\pm \infty$ differ
by a prescribed integer, then minimizers can exhibit multiple points
of discontinuity and hence multiple domain walls.
We interpret the limiting process of letting $\varepsilon\rightarrow
0$ as going from a mesoscale view of the system to macroscale view.
Instead of seeing a single spatially diffuse domain wall, we see
multiple domain walls separated by a finite distance.  At the
macroscale, we no longer resolve the details of individual walls.
Hence these no longer appear as smooth transitions but instead now
appear as jump discontinuities.

Motivated by these observations, we introduce the additional length
scale $\varepsilon$ into the original discrete energies for the chain,
and we consider more general displacement boundary conditions that
allow the change in the displacement from $\pm\infty$ 
to be any integer.
This yields a 2-parameter collection of discrete energies each
depending on $\delta$ and $\varepsilon$.  Each of these energies has
the structure of the Modica-Mortola functional.  We then apply
the discrete-to-continuum procedure from the first part of the paper to
this collection to find the $\Gamma$-limit of these energies.

The proof of $\Gamma$-convergence in the second part of this paper is
similar in structure and in some details to the proof of the
$\Gamma$-convergence result in \cite{leoni2013gamma}.  However, our
proof incorporates the discrete-to-continuum procedure in
\cite{leoni2013gamma}.  In addition, the problem we study here differs
from that in \cite{leoni2013gamma} because it is on an unbounded
spatial domain and our potential is periodic with an infinite number
of energy wells.  These differences, in particular, necessitate a more
complicated construction for the recovery sequence.

In the next section, we derive our discrete energies and rescale to
introduce a small parameter.  Section~\ref{s3}, we prove a
$\Gamma$-convergence result for these energies.  At the start of
Section~\ref{s5}, we motivate the introduction of another small
parameter into the discrete energies.  We then prove
$\Gamma$-convergence result for the 2-parameter family of discrete
energies.  The final section contains some concluding comments.

\section{Description of Discrete Problem} \label{s1}

We model a one-dimensional chain of atoms parallel to and a constant
distance $\bar{s}$ from an infinite line of fixed atoms.  See
Figure~\ref{ff1}.  The fixed atoms are a distance $l$ apart.  Atoms on
the chain can displace horizontally.  Neighboring atoms interact
elastically, and every atom on the chain weakly interacts with every
atom on the lower line.  This weak interaction models van der Waals
forces between the atoms.  We assume that in the reference
configuration the position of atom $i$ on the chain is $il+l/2$ for
$i\in \mathbb{Z}$.  In the deformed configuration of the chain, atom
$i$ has displacement $u_{i}$.

\begin{figure}[htb]
\centering
  \includegraphics[scale = 0.5, clip, trim=0in 1.5in 0in 2in]{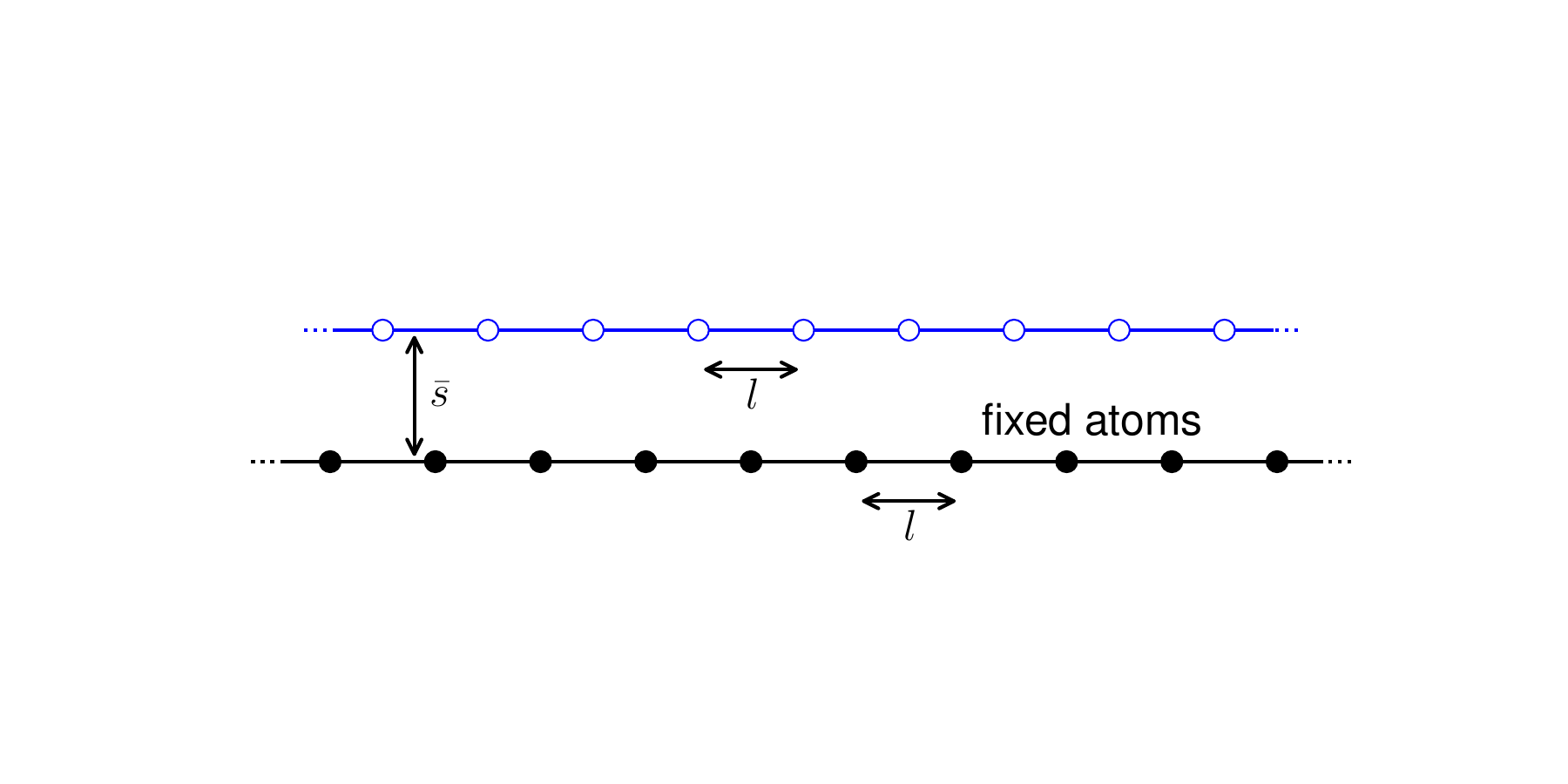}
\caption{Reference Configuration of Discrete System.}
  \label{ff1}
\end{figure}

We associate with the displacements $\{ u_{i}  \}$ the discrete energy
\begin{align}
  \hat{E}[\{ u_{i}\}]
  &=
  a \sum_{i=-\infty}^{\infty}\left(\frac{u_{i}-u_{i-1}}{l}\right)^{2}
  \nonumber \\
  &\phantom{=}+
  b \sum_{i=-\infty}^{\infty} \sum_{j=-\infty}^{\infty}
  \left[
    h\left( \frac{il+\frac{1}{2}l+u_{i}-jl}{p} \right)
    -
    h\left( \frac{il+\frac{1}{2}l-jl}{p} \right)
  \right], 
  \label{ee1}
\end{align}
where $a$ and $b$ are positive constants.
The first sum is the total elastic energy between
neighboring atoms on the chain.
For the second sum, $h$ describes the pairwise weak interaction energy
as a function of horizontal position on the chain relative to the
$j$th fixed atom.  (See the discussion following \eqref{ee8} below.)
The constant $p$ is a length scale associated with $h$.

To avoid boundary effects, we consider a system of infinite size.
In this section and the next, we assume that 
\begin{equation}
  \underset{i\rightarrow -\infty}{\lim}u_{i}=0,
  \qquad
  \underset{i\rightarrow \infty}{\lim}u_{i}=l.
  \label{ee19}
\end{equation}
These boundary conditions prevent the system from attaining global registry.

Next, we rescale the distance between atoms by $l$
and we rescale the other variables as
\begin{equation*}
  \xi_{i} = \frac{u_{i}}{l},
  \qquad
  \sigma=\frac{p}{l},
  \qquad
  E=\frac{1}{\sqrt{ab}}\hat{E}.
\end{equation*}
Using these in \eqref{ee1} yields
\begin{align*}
  E[\{ \xi_{i}  \}]
  &=
  \sqrt{a/b}\sum_{i=-\infty}^{\infty}\left(\xi_{i}-\xi_{i-1}\right)^{2}
  \nonumber \\
  &\phantom{=}+
   \sqrt{b/a}\sum_{i=-\infty}^{\infty} \sum_{j=-\infty}^{\infty}  
  \left[
    h\left( \frac{i+\frac{1}{2}+\xi_{i}-j}{\sigma} \right)
    -
    h\left( \frac{i+\frac{1}{2}-j}{\sigma} \right)
  \right] \nonumber  \\
  &=
  \sqrt{b/a}\sum_{i=-\infty}^{\infty}
  \left(\frac{\xi_{i}-\xi_{i-1}}{\sqrt{b/a}}\right)^{2}
  \nonumber \\
  &\phantom{=}+
   \sqrt{b/a}\sum_{i=-\infty}^{\infty} \sum_{j=-\infty}^{\infty}  
  \left[
    h\left( \frac{i+\frac{1}{2}+\xi_{i}-j}{\sigma} \right)
    -
    h\left( \frac{i+\frac{1}{2}-j}{\sigma} \right)
  \right]. 
\end{align*}
We define $\delta=\sqrt{b/a}$ and 
\begin{align}
  E^{\delta}[\{ \xi_{i}  \}]
  &=
  \delta\sum_{i=-\infty}^{\infty}
   \left(\frac{\xi_{i}-\xi_{i-1}}{\delta}\right)^{2}
  \nonumber \\
  &\phantom{=}+
   \delta\sum_{i=-\infty}^{\infty} \sum_{j=-\infty}^{\infty}  
  \left[
    h\left( \frac{i+\frac{1}{2}+\xi_{i}-j}{\sigma} \right)
    -
    h\left( \frac{i+\frac{1}{2}-j}{\sigma} \right)
  \right].
  \label{ee6}
\end{align}
%


Note that, because the chain is infinite, there is no natural scale
associated with its length.  On the other hand, the
problem has a dimensionless parameter, $\delta$, the ratio of the
strength of elastic interaction to the strength of the weak
interaction.
In the Riemann sums \eqref{ee6}, $\delta$ plays the role of a
discretization length.
This suggests that variations of discrete energy minimizers
$\{\xi_i\}$ should be of order $1$ on intervals of the length
$\sim1/\delta$.
We can interpret this fact by assuming that we are
viewing the chain with a unit spacing between the atoms
from a distance $\sim1/\delta$, and hence as
$\delta\to 0$, $\xi$ would appear as a continuous functions with $O(1)$
variations over intervals of length $1$.
As we will discuss below, from this point of view, the limiting process of $\delta$ going to
zero corresponds to moving away from the chain at a specific rate.  If
we move away from the chain at a rate faster than $1/\delta$, we would
observe sharp transitions in the
function $\xi$.
%


\section{Convergence to a Mesoscale Model} \label{s3}

In this section, we establish a $\Gamma$-convergence result for a
continuum version of $E^{\delta}$ as $\delta\rightarrow 0$.  The
discrete-to-continuum approach we use here follows
\cite{braides2006discrete, scardia2016continuum}.

To adapt the boundary conditions \eqref{ee19} to the continuum setting, 
we let $\bar{v}$ be a continuously differentiable, strictly increasing
function satisfying 
\begin{gather*}
  \bar{v}(x)
  \rightarrow 
  \begin{Bmatrix}
   1\\
   0
  \end{Bmatrix}
  \ \text{as}\ 
  x\rightarrow 
  \begin{Bmatrix}
   \infty\\
   -\infty
  \end{Bmatrix}, \\[2mm]
  1-\bar{v}\in L^{1}([0,\infty))\cap L^{2}([0,\infty)),
  \
  \bar{v}\in L^{1}((-\infty,0])\cap L^{2}((-\infty,0]),
  \
  \bar{v}'\in L^{1}(\reals)\cap L^{2}(\reals).
\end{gather*}
(For example, we could use $\bar{v}(x)=(\tanh x+1)/2$.)
We define the ambient space $A$ by
\begin{equation}
  A
  =
  \{ \xi\, :\, \xi-\bar{v}\in
  H^{1}(\reals) \},
  \label{ee21}
\end{equation}
and we say that $\{ \xi_{n} \}$ converges weakly to $\xi$ in
$A$ if $\xi_{n}-\bar{v}\rightharpoonup\xi-\bar{v}$ in
$H^{1}(\reals)$ as $n\rightarrow \infty$.
Note that this means in particular that
$\xi_{n}-\bar{v}$ converges to
$\xi-\bar{v}$ strongly in $L^{2}(\reals)$, and hence
$\|\xi_{n}-\xi\|_{2}\rightarrow 0$ as $n\rightarrow 0$.
We define a continuum energy $F:A\rightarrow  \reals$ by
\begin{equation}
  F[\xi]
  =
  \int_{-\infty}^{\infty}\! \xi'(x)^{2} \,dx
  +
  \int_{-\infty}^{\infty}\!w(\xi(x)) \,dx,
  \label{ee7}
\end{equation}
where 
\begin{equation}
  w(\xi)
  =
  \sum_{j=-\infty}^{\infty}\left[
  h\left( \frac{j+\frac{1}{2}+\xi}{\sigma} \right)
  -
  h\left( \frac{j+\frac{1}{2}}{\sigma}\right) \right].
  \label{ee8}
\end{equation}
The function $w$ has period 1 and $w(z)=0$ for all
$z\in\mathbb{Z}$.  We assume that $h$ is an even function, from which it
can be shown that $w$ is even about $1/2$ and that $w'(z)=0$ for all
$z\in \mathbb{Z}$.  We further assume that $h$ is such that $w(\xi)>0$ for
all $\xi\notin\mathbb{Z}$, that $w''(z)>0$ for all $z\in \mathbb{Z}$,
and that $w'''$ is continuous on $\reals$.
(These would be true, for example, if
$h(d)=V(\sqrt{d^{2}+\bar{s}^{2}})$, where $V$ is a
Lennard-Jones $12$-$6$ potential and $\sigma$ and $\bar{s}$ are chosen
appropriately.)
Note that $w$ attains its minimum value of $0$ at each
$z\in\mathbb{Z}$.

For $\delta>0$, we define the partition $P_{\delta}$ of
$\reals$ by
\begin{equation}
  P_{\delta}=\{ \ldots,-2\delta,-\delta,0,\delta,2\delta,\ldots  \},
  \label{ee20}
\end{equation}
we define the function space $A_{\delta}$ by
\begin{equation*}
  A_{\delta}
  =
  \{ \xi\, :\, \xi\in A,\ \xi\ \text{is continuous},
  \xi\ \text{is piecewise affine on}\ P_{\delta}\},
\end{equation*}
and we
define the functional $E_\delta:A\rightarrow  \reals$ by
\begin{equation}
  E_\delta[\xi]
  :=
  \left\{
  \begin{array}{ll}
    \int_{-\infty}^{\infty}\xi^\prime(x)^{2}\,dx
    +\int_{-\infty}^{\infty}w\left(\xi(x)\right)\,dx, & \xi\in A_\delta, \\
    \infty, & \xi\in A\setminus A_\delta.
  \end{array}
  \right. \label{ee11}
\end{equation}

Below we prove that the energies $E_{\delta}$ $\Gamma$-convergence to
the continuum energy $F$ as $\delta \rightarrow 0$.  Our motivation
for this result is to show that the discrete energies $E^{\delta}$
defined in \eqref{ee6} converge to $F$.  However, when defining the
functional $E_{\delta}$, we replace the discrete weak interaction
energy in \eqref{ee6} by the integral expression for the weak
interaction energy in \eqref{ee11}.
Our definition of $E_{\delta}$ is consistent with the approach in
\cite{braides2006discrete, scardia2016continuum}, and in this way
$E_{\delta}$ can be defined on the space of functions $A_{\delta}$
contained in the larger ambient space $A$.
To justify interpreting our $\Gamma$-convergence result as a statement
about the discrete energies $E^{\delta}$, we present next
Lemma~\ref{l:0}, which shows that under appropriate conditions
the difference between $E_{\delta}$ and $E^{\delta}$ goes to $0$ as
$\delta \rightarrow 0$.
For this Lemma, note that with a sequence $\{\xi_i\}_{i=-\infty}^\infty\subset\mathbb{R}$ we
can associate a piecewise affine function
$\xi^\delta:\mathbb{R}\to\mathbb{R}$ on $P_{\delta}$ by setting
$\xi^\delta(i\delta)=\xi_i$ and assuming that $\xi^\delta$ is affine
on the interval $[(i-1)\delta,i\delta]$ for every $i\in\mathbb{Z}$.

\begin{lemma}
    \label{l:0}
    Let $C$ and $\delta$ be positive constants.  Let $\{\xi_{i}\}$
    be a sequence and let $\xi^\delta$ be the
    associated piecewise affine function on $P_{\delta}$.  Suppose that either
    $E^{\delta}[\{\xi_{i}\}]<C$ or that
    $E_{\delta}\left[\xi^\delta\right]<C$.  Then there is a constant
    $\hat{C}$ that depends only on $C$ such that 
    \[
    \left|
    E_{\delta}\left[\xi^\delta\right]-E^{\delta}[\{\xi_{i}\}]
    \right|
    <
    \hat{C}\delta^2.
    \]
\end{lemma}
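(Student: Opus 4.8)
The plan is to show that the elastic parts of $E_\delta[\xi^\delta]$ and $E^\delta[\{\xi_i\}]$ coincide \emph{exactly}, so that the entire discrepancy comes from the two weak-interaction terms, and then to estimate that discrepancy by a quadrature argument combined with an a priori bound on $\sum_i(\xi_i-\xi_{i-1})^2$. Write $\Delta_i := \xi_i - \xi_{i-1}$. Since $\xi^\delta$ is affine with slope $\Delta_i/\delta$ on $[(i-1)\delta,i\delta]$, one has $\int_{(i-1)\delta}^{i\delta}(\xi^\delta)'(x)^2\,dx = \Delta_i^2/\delta$, and summing over $i$ gives $\int_{\mathbb{R}}(\xi^\delta)'(x)^2\,dx = \delta\sum_i(\Delta_i/\delta)^2$, which is precisely the elastic sum in \eqref{ee6}. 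Hence the elastic terms cancel identically, and $E_\delta[\xi^\delta]-E^\delta[\{\xi_i\}]$ equals the difference of the two weak-interaction terms.

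First I would reduce the discrete double sum to a single sum over $w$. Fixing $i$ and substituting $m=i-j$ in the inner sum shows, by the definition \eqref{ee8} of $w$, that the discrete weak-interaction term in \eqref{ee6} equals $\delta\sum_i w(\xi_i)$. The continuum weak-interaction term is $\int_{\mathbb{R}}w(\xi^\delta)\,dx$; parametrizing each interval by $x=(i-1)\delta+s\delta$, $s\in[0,1]$, gives $\int_{\mathbb{R}}w(\xi^\delta)\,dx = \delta\sum_i\int_0^1 w(\xi_{i-1}+s\Delta_i)\,ds$. Thus the quantity to bound is $\delta\sum_i\big[\int_0^1 w(\xi_{i-1}+s\Delta_i)\,ds - w(\xi_i)\big]$.

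Next I would record the a priori bound. Since $w\ge 0$, under either hypothesis ($E^\delta<C$ or $E_\delta<C$) the elastic term is at most $C$, and because the two elastic terms coincide this yields $\sum_i\Delta_i^2 \le C\delta$ in both cases. The decisive step is then a single-interval trapezoidal estimate: because $w\in C^2$ and is $1$-periodic, $\|w''\|_\infty<\infty$, and $\int_0^1 w(\xi_{i-1}+s\Delta_i)\,ds = \tfrac12\big(w(\xi_{i-1})+w(\xi_i)\big) - \tfrac{1}{12}\Delta_i^2\,w''(\eta_i)$ for some $\eta_i$. Summing, the main terms produce the telescoping series $\tfrac12\sum_i\big(w(\xi_{i-1})-w(\xi_i)\big)$, whose symmetric partial sums equal $\tfrac12\big(w(\xi_{-N-1})-w(\xi_M)\big)$ and therefore vanish in the limit, since the boundary conditions \eqref{ee19} force $\xi_i\to 0,1$ as $i\to\mp\infty$ and $w(0)=w(1)=0$. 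What remains is the quadrature error $-\tfrac{1}{12}\sum_i\Delta_i^2 w''(\eta_i)$, bounded in absolute value by $\tfrac{1}{12}\|w''\|_\infty\sum_i\Delta_i^2 \le \tfrac{1}{12}\|w''\|_\infty C\delta$. Multiplying by the prefactor $\delta$ gives $|E_\delta[\xi^\delta]-E^\delta[\{\xi_i\}]|\le \hat C\delta^2$ with $\hat C = \tfrac{1}{12}\|w''\|_\infty C$, which depends only on $C$.

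The conceptual heart of the estimate — and the reason the bound is $O(\delta^2)$ rather than the $O(\delta)$ one might first expect — is this exact cancellation of the leading-order discrepancy. Expanding $\int_0^1 w(\xi_{i-1}+s\Delta_i)\,ds$ about $\xi_i$ produces a first-order term $-\tfrac12\Delta_i w'(\xi_i)$ per interval; summed and multiplied by $\delta$ this contributes $-\tfrac{\delta}{2}\sum_i\Delta_i w'(\xi_i)$, which is not manifestly smaller than $O(\delta)$. The trapezoidal symmetrization reorganizes precisely this leading part into the telescoping sum above, which collapses to $0$ because the endpoint limits are pinned at consecutive integers and $w$ annihilates $\mathbb{Z}$; only the genuinely second-order quadrature remainder survives, controlled by $\sum_i\Delta_i^2=O(\delta)$. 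The one technical point requiring care is the legitimacy of splitting the doubly infinite series: since the remainder series converges absolutely and the telescoping series converges as a symmetric limit, it suffices to carry out the algebra on truncations $\sum_{i=-N}^{M}$ and let $N,M\to\infty$.
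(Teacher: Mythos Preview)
Your argument is correct and is essentially the paper's proof: both reduce to the trapezoidal-rule error $\int_0^1 w(\xi_{i-1}+s\Delta_i)\,ds-\tfrac12(w(\xi_{i-1})+w(\xi_i))=O(\|w''\|_\infty\Delta_i^2)$ and then invoke $\sum_i\Delta_i^2\le C\delta$ from the elastic bound. The only cosmetic difference is that the paper first rewrites $\delta\sum_i w(\xi_i)$ as $\sum_i\tfrac{\delta}{2}(w(\xi_{i-1})+w(\xi_i))$ (so each per-interval term is already the quadrature error and no telescoping is needed), whereas you apply the trapezoidal formula first and then dispose of the leftover $\tfrac12\sum_i(w(\xi_{i-1})-w(\xi_i))$ via the boundary conditions.
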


\begin{proof}
    We can write
    \begin{equation}
    \label{eq:0.1}
    E_{\delta}\left[\xi^\delta\right]-E^{\delta}\left[\{\xi_{i}\}\right]
    =
    \sum_{i=-\infty}^{\infty}\int_{(i-1)\delta}^{i\delta}\left\{w\left(\xi_i\left(\frac{\eta}{\delta}-i+1\right)+\xi_{i-1}\left(i-\frac{\eta}{\delta}\right)\right)-\frac{1}{2}\left(w\left(\xi_{i-1}\right)+w\left(\xi_{i}\right)\right)\right\}\,d\eta\,.
    \end{equation}
    Next we estimate the integral under the summation sign. Fixing an
    $i\in\mathbb{Z}$ and setting $t=\frac\eta\delta-i+1$, we can write
    this integral as
    \begin{align}
    G^\delta_i&:=\delta\int_0^1\left\{w\left(\xi_it+\xi_{i-1}(1-t)\right)-\frac{1}{2}\left(w\left(\xi_{i-1}\right)+w\left(\xi_{i}\right)\right)\right\}\,dt\nonumber\\
    &=
    \frac{\delta}{2}\int_0^1\left\{w\left(\xi_it+\xi_{i-1}(1-t)\right)-w\left(\xi_{i-1}\right)\right\}\,dt
    +
    \frac{\delta}{2}\int_0^1\left\{w\left(\xi_it+\xi_{i-1}(1-t)\right)-w\left(\xi_i\right)\right\}\,dt\label{eq:G}\\
    &=
    \frac{\delta}{2}\int_0^1\left[\left\{w\left(\xi_{i-1}+\left(\xi_i-\xi_{i-1}\right)t\right)-w\left(\xi_{i-1}\right)\right\}
    +
    \left\{w\left(\xi_i-\left(\xi_i-\xi_{i-1}\right)t\right)-w\left(\xi_i\right)\right\}\right]\,dt\,\nonumber,
    \end{align}
    where we used the change of variables $p=1-t$ in the second
    integral on the second line and then relabeled $p$ as
    $t$. Expanding the integrand, we obtain 
    \[w\left(\xi_{i-1}+\left(\xi_i-\xi_{i-1}\right)t\right)-w\left(\xi_{i-1}\right)=w^\prime\left(\xi_{i-1}\right)\left(\xi_i-\xi_{i-1}\right)t+\frac12w^{\prime\prime}\left(\overline\xi_{i}\right){\left(\xi_i-\xi_{i-1}\right)}^2t^2\,,\]
    and
    \[w\left(\xi_{i}-\left(\xi_i-\xi_{i-1}\right)t\right)-w\left(\xi_{i}\right)=-w^\prime\left(\xi_{i}\right)\left(\xi_i-\xi_{i-1}\right)t+\frac12w^{\prime\prime}\left(\tilde\xi_{i}\right){\left(\xi_i-\xi_{i-1}\right)}^2t^2\,,\]
    so that
    \begin{align*}
      \{w(\xi_{i-1}&+(\xi_i-\xi_{i-1})t)-w(\xi_{i-1})\}
      +
      \left\{w\left(\xi_i-\left(\xi_i-\xi_{i-1}\right)t\right)-w\left(\xi_i\right)\right\}\\
      &=-\left(w^\prime\left(\xi_{i}\right)-w^\prime\left(\xi_{i-1}\right)\right)\left(\xi_i-\xi_{i-1}\right)t
      +
      \frac12\left(w^{\prime\prime}\left(\overline\xi_{i}\right)
      +
      w^{\prime\prime}\left(\tilde\xi_{i}\right)\right){\left(\xi_i-\xi_{i-1}\right)}^2t^2\\ 
      &=-w^{\prime\prime}\left(\hat\xi_{i}\right){\left(\xi_i-\xi_{i-1}\right)}^2t
      +
      \frac12\left(w^{\prime\prime}\left(\overline\xi_{i}\right)
      +
      w^{\prime\prime}\left(\tilde\xi_{i}\right)\right){\left(\xi_i-\xi_{i-1}\right)}^2t^2\,,
    \end{align*}
    where $\tilde\xi_{i},\overline\xi_{i},\hat\xi_{i}$ lie between $\xi_{i-1}$ and $\xi_{i}.$ Substituting the result into \eqref{eq:G} and integrating in $t,$ we obtain
    \[G_i^\delta=\frac\delta{12}{\left(\xi_i-\xi_{i-1}\right)}^2\left[w^{\prime\prime}\left(\overline\xi_{i}\right)+w^{\prime\prime}\left(\tilde\xi_{i}\right)-3w^{\prime\prime}\left(\hat\xi_{i}\right)\right]\,,\]
    hence
    \[\left|G_i^\delta\right|\leq
    M\delta{\left(\xi_i-\xi_{i-1}\right)}^2,\]
where $M$ is a constant that does not depend on $\{\xi_{i}\}$ or $\delta$.
Substituting this inequality into \eqref{eq:0.1} gives
    \[
    \left|
    E_{\delta}\left[\xi^\delta\right]-E^{\delta}[\{\xi_{i}\}]
    \right|
    =
    \left|
    \sum_{i=-\infty}^{\infty} G_i^\delta
    \right|
    \leq
    M\delta\sum_{i=-\infty}^{\infty}{\left(\xi_i-\xi_{i-1}\right)}^2
    \leq
    M\delta^2
    \min\left\{E_{\delta}\left[\xi^\delta\right],E^\delta\left[\{\xi_{i}\}\right]\right\}
    \leq
    \hat{C}\delta^2.\]
\end{proof}

This lemma tells us that the discrete energies $E^{\delta}[\{
  \xi_{i}^{\delta} \}]$ of a collection of sequences $\{\{
\xi_{i}^{\delta} \}\}_{\delta}$ are asymptotically the same as the energies
$E_\delta[\xi^{\delta}]$ of the collection $\{\xi^{\delta} \}_{\delta}$ of
corresponding piecewise affine functions in the limit as
$\delta\rightarrow 0$.
For the rest of this section, we work with
the energies $E_\delta[\xi^{\delta}]$ defined by \eqref{ee11}.

Next we prove a compactness result for the energies $E_{\delta}$.
As a first step, we define
\begin{equation}
  p(z)
  =
  2\int_{0}^{z}\! \sqrt{w(t)}\,dt.
  \label{ee194}
\end{equation}
Note that if $\xi\in A$ and $E_{\delta}[\xi]\leq C$, then
\begin{align}
  C
  \geq
  E_\delta[\xi]
  \geq
  2\int_{-\infty}^{\infty}\!\left|\xi'(x)\sqrt{w(\xi(x))}\right| \,dx
  =
  \int_{-\infty}^{\infty}\!\left|\frac{d}{dx}2\int_{0}^{\xi(x)}\!\sqrt{w(t)}\,dt\right| \,dx
  =
  \operatorname{Var}(p(\xi(x)),\reals).
  \label{ee196}
\end{align}
With this definition, we establish the following lemma for the
compactness result we present below.  As we see later, this lemma
allows us to handle the translational invariance in our problem.

\begin{lemma}
Let $0<\alpha<<1$ and set $I_{\alpha}=p(1-\alpha)-p(\alpha)=2\int_{\alpha}^{1-\alpha}\!\sqrt{w(t)} \,dt.$
Let $C$ satisfy $p(1)<C<\frac{3}{2}I_{\alpha}$.  Let $\xi\in A_{\delta}$
such that $E_\delta[\xi] \leq C$ with $\delta\leq 1$.  There exists
$\bar{a}, \bar{C}\in \reals$
with $\bar{C}>0$ such that,
if $\xi_{\bar{a}}(x)=\xi(x+\bar{a})$, then
\begin{equation}
  \|\xi_{\bar{a}}-\bar{v}\|_{2}
  \leq
  \bar{C},
  \label{ee195}
\end{equation}
where $\bar{a}$ is a multiple of $\delta$ and 
$\bar{C}$ depends only on $C$.    \label{r8}
\end{lemma}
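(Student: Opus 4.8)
The plan is to exploit the bound \eqref{ee196}, namely $\operatorname{Var}(p(\xi),\reals)\le E_\delta[\xi]\le C$, together with the shape of $w$, to show that $\xi$ makes a \emph{single} transition between the wells at $0$ and $1$, and that once this transition is translated to the origin the tails of $\xi$ are controlled in $L^2$ by the energy alone. First I would record two consequences of the hypotheses on $w$. Since $w>0$ off $\mathbb Z$ and $w$ has period $1$, the function $p$ in \eqref{ee194} is a strictly increasing bijection of $\reals$ with $p(n)=np(1)$ and $p(n+s)-p(n)=p(s)$ for $n\in\mathbb Z$, so $p^{-1}$ is continuous. Because $p(\xi)\to0$ as $x\to-\infty$ and $\operatorname{Var}(p(\xi),\reals)\le C$, we get $|p(\xi(x))|\le C$ for every $x$, hence $\|\xi\|_\infty\le M_C:=\max\{p^{-1}(C),-p^{-1}(-C)\}$, a bound depending only on $C$. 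Next, since $w(z)=w'(z)=0$ and $w''(z)>0$ at each integer $z$ and $w'''$ is continuous, there are $\eta\in(0,\alpha]$ and $c>0$, independent of $\xi$, with $w(s)\ge c\,\mathrm{dist}(s,\mathbb Z)^2$ whenever $\mathrm{dist}(s,\mathbb Z)\le\eta$, and $w(s)\ge w_\eta:=\min\{w(s):\mathrm{dist}(s,\mathbb Z)\ge\eta\}>0$ otherwise.

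The key structural step is to extract a single transition from the budget $C<\tfrac32 I_\alpha$. Each maximal monotone run on which $\xi$ crosses a band $[k+\alpha,k+1-\alpha]$, $k\in\mathbb Z$, contributes at least $I_\alpha$ to $\operatorname{Var}(p(\xi),\reals)$ by periodicity of $p$. Since $\xi-\bar v\in H^1$ forces $\xi$ to pass from the well at $0$ to the well at $1$, there is at least one upward crossing of $[\alpha,1-\alpha]$, contributing $I_\alpha$; any additional full crossing of any band (in particular any return crossing) would push the variation to at least $2I_\alpha>\tfrac32 I_\alpha>C$, a contradiction. Hence $\xi$ crosses $[\alpha,1-\alpha]$ exactly once, stays below $1-\alpha$ to the left of this crossing and above $\alpha$ to the right, and fully crosses no other band. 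I would then let $x_*$ be a point with $\xi(x_*)=\tfrac12$ (all such points lie in this single crossing) and set $\bar a$ to be the nearest multiple of $\delta$ to $x_*$; because $\delta\le1$ this places the transition within distance $1$ of the origin, and translating by a multiple of $\delta$ keeps $\xi_{\bar a}\in A_\delta$. Finally, $\int_\reals w(\xi)\le C$ and $w\ge w_\eta$ off the $\eta$-neighborhoods of $\mathbb Z$ bound the measure of $\{x:\mathrm{dist}(\xi(x),\mathbb Z)\ge\eta\}$ by $C/w_\eta=:L_\eta$; with $\eta\le\alpha$ and the single-crossing property this confines the transition to a set of measure at most $L_\eta$, outside of which $\xi_{\bar a}$ is within $\eta$ of $0$ on the left half-line and within $\eta$ of $1$ on the right half-line.

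To assemble \eqref{ee195} I would split the integral over the three regions. On the left region, where $\xi_{\bar a}\in(-\eta,\eta)$, the quadratic lower bound gives $\int \xi_{\bar a}^2\le c^{-1}\int_\reals w(\xi_{\bar a})\le C/c$, while $\bar v\in L^2((-\infty,0])$ contributes a fixed constant, so $(\xi_{\bar a}-\bar v)^2$ integrates to at most $2C/c+2\|\bar v\|_{L^2((-\infty,0])}^2$; the right region is handled identically using $w(s)\ge c(s-1)^2$ near $1$ and $1-\bar v\in L^2([0,\infty))$. On the remaining transition set, of measure at most $L_\eta$ and lying within distance $1$ of the origin, both $|\xi_{\bar a}|\le M_C$ and $|\bar v|\le1$, so its contribution is at most $(M_C+1)^2(L_\eta+1)$. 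Summing yields \eqref{ee195} with a constant $\bar C$ depending only on $C$. The main obstacle is the structural step of the second paragraph: converting the variation budget $C<\tfrac32 I_\alpha$ into a genuine single-transition statement requires care to rule out small excursions that do not fully cross a band and to verify that, after translation, the near-$0$ and near-$1$ regions are exactly the two half-lines flanking a bounded transition set. Once that is established, the quadratic estimate on $w$, the $L^\infty$ bound from the variation, and the integrability of $\bar v$ make the tail bound routine.
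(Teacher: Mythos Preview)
Your argument is essentially correct and reaches the same conclusion, but by a genuinely different route from the paper. The paper never isolates a single crossing of $\xi$; instead it introduces the mismatch sets $V_a^0=\{|\xi_a|<\alpha\}\cap\{|1-\bar v|<\alpha\}$ and $V_a^1=\{|1-\xi_a|<\alpha\}\cap\{|\bar v|<\alpha\}$, shows via the variation bound that for each $a$ one of them is empty, proves that $a\mapsto|V_a^i|$ is continuous, and then selects a grid point $\bar a$ with $|V_{\bar a}|\le1$; only afterwards is the pointwise bound $-1/2\le\xi_{\bar a}\le3/2$ derived. Your approach front-loads the $L^\infty$ control from the variation estimate and locates $\bar a$ geometrically at the crossing point, which avoids the continuity argument entirely at the price of more careful structural bookkeeping on the single-transition claim. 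The final $L^2$ decomposition in both proofs is essentially the same, combining the quadratic lower bound on $w$ near integers with the Chebyshev-type measure bound on the bad set. Two small points deserve tightening: the parenthetical claim that \emph{all} solutions of $\xi(x_*)=\tfrac12$ lie in the single crossing is false (outside the monotone crossing interval $\xi$ may still oscillate through $\tfrac12$ without making a full band crossing), so you should explicitly pick $x_*$ inside that interval, which then guarantees the left and right regions sit in $(-\infty,1]$ and $[-1,\infty)$ respectively; and the transition set $\{\operatorname{dist}(\xi_{\bar a},\mathbb Z)\ge\eta\}$ need not lie near the origin, but your estimate on it uses only its measure bound and the $L^\infty$ bound, so that location claim is unnecessary and can simply be dropped.
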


\noindent \textit{Proof}. 
For $a\in \reals$, we define $\xi_{a}(x)=\xi(x+a)$ and
introduce the sets
\begin{equation*}
  V_{a}^{0}=\{ |\xi_{a}|<\alpha \}\cap \{ |1-\bar{v}|<\alpha  \},
  \quad
  V_{a}^{1}=\{ |1-\xi_{a}|<\alpha \}\cap \{ |\bar{v}|<\alpha  \},
  \quad
  V_{a}=V_{a}^{0}\cup V_{a}^{1}.
\end{equation*}
Note that $\xi_{a}$ is piecewise affine on a partition with constant
mesh size $\delta$ but that this partition corresponds to $P_{\delta}$
only if $a$ is a multiple of $\delta$.
For any $a$, we define $\bar{E}_{\delta}[\xi_{a}]$ by the top formula on the
right-hand side of \eqref{ee11}.
Hence
$\bar{E}_\delta[\xi_{a}] = E_\delta[\xi] \leq C$
for all values of $a$ and $E_\delta[\xi_{a}]=\bar{E}_\delta[\xi_{a}]$
if $a$ is a multiple of $\delta$.
Also note that
$\|\xi_{a}-\bar{v}\|_{2}<\infty$
follows easily from the corresponding inequality for $\xi$.
In the first part of the proof, we show that there is an $\bar{a}$
that is a multiple of $\delta$
such that $|V_{\bar{a}}|\leq 1$. 

{\bf Step 1.} We show that for any $a$, either
$V_{a}^{0}=\emptyset$ or $V_{a}^{1}=\emptyset$.
Suppose $x_{0}\in V_{a}^{0}$ and $x_{1}\in V_{a}^{1}$.  Because
$\bar{v}$ is increasing, we know that 
$x_{1}<x_{0}$.  Because $\|\xi_{a}-\bar{v}\|_{2}<\infty$ and
$\bar{v}\rightarrow 1$ as $x\rightarrow \infty$,
there is a number $x_{1}'>x_{0}$ such that
$|1-\xi_{a}(x_{1}')|<\alpha$.  
Set $\alpha_{0}=\xi_{a}(x_{0}), \alpha_{1}=1-\xi_{a}(x_{1})$,
and $\alpha_{1}'=1-\xi_{a}(x_{1}')$.  We have the estimate
\begin{align}
  \operatorname{Var}(p(\xi_{a}(x)),\reals)
  &\geq
  |p(\xi_{a}(x_{0}))-p(\xi_{a}(x_{1}))|
  +
  |p(\xi_{a}(x_{1}'))-p(\xi_{a}(x_{0}))| \nonumber \\
  &=
  |p(\alpha_{0})-p(1-\alpha_{1})|
  +
  |p(1-\alpha_{1}')-p(\alpha_{0})| \nonumber \\
  &=
  2\int_{\alpha_{0}}^{1-\alpha_{1}}\!\sqrt{w(t)} \,dt
  +
  2\int_{\alpha_{0}}^{1-\alpha_{1}'}\!\sqrt{w(t)} \,dt
  \geq
  2I_{\alpha}.
  \label{ee200}
\end{align}
This estimate and \eqref{ee196} (with $\bar{E}_{\delta}$ replacing
$E_{\delta}$) imply that $C\geq 2I_{\alpha}$, which contradicts 
our hypothesis that $I_{\alpha}> 2C/3$.  We conclude that for any
$a\in \reals$, either $V_{a}^{0}=\emptyset$ or $V_{a}^{1}=\emptyset$.

{\bf Step 2.}  We show next that the function $a\rightarrow |V_{a}^{0}|$ is
continuous.  (That $a\rightarrow |V_{a}^{1}|$ is continuous follows
by a similar argument.)
Let $\bar{x}$ be the unique
point such that $\bar{v}(\bar{x})=1-\alpha$.  Then 
$V_{a}^{0}=\{ |\xi_{a}|<\alpha \}\cap (\bar{x},\infty)$, and, because 
$\{ |\xi_{a}|<\alpha \}=\{ |\xi|<\alpha \}-a$, we see that
$V_{a}^{0}=\left(\{ |\xi|<\alpha \}\cap (\bar{x}+a,\infty)\right)-a$, hence
\begin{equation}
  |V_{a}^{0}|
  =
  |\{ |\xi|<\alpha \}\cap (\bar{x}+a,\infty)|.
  \label{ee198}
\end{equation}
%
Now suppose $a>a'>0$.  Then
\begin{equation*}
  \{ |\xi|<\alpha \}\cap (\bar{x}+a',\infty)
  =
  \left(\{ |\xi|<\alpha \}\cap (\bar{x}+a',\bar{x}+a]\right)
  \cup
  \left(\{ |\xi|<\alpha \}\cap (\bar{x}+a,\infty)\right).
\end{equation*}
Taking the measure of both sides and using \eqref{ee198} twice
implies that $0\leq |V_{a'}^{0}|-|V_{a}^{0}|\leq a-a'$, from
which continuity follows.

{\bf Step 3.} We now show that there exists an $\bar a$ such that
$\bar a$ is a multiple of $\delta$ and
$|V_{{\bar a}}|\leq 1$.  If $|V_{0}^{0}|=|V_{0}^{1}|=0$, we choose
$\bar{a}=0$.  Suppose instead that $|V_{0}^{0}|>0$ and let $x_{0}\in
V_{0}^{0}$.  (If $|V_{0}^{1}|>0$, the argument is
similar.)  Pick $x_{1}>x_{0}$ such that $|1-\xi(x_{1})|<\alpha$.  Then
for $a$ sufficiently large, $x_{1}-a\in V_{a}^{1}$ and hence
$V_{a}^{0}$ is empty.  Set $\bar{a}'=\inf\{ a>0\, :\, |V_{a}^{0}|=0 \}$.  Note
that
$\bar{a}'>0$ and that
$|V_{\bar{a}'}^{0}|=0$ by continuity.
If $|V_{\bar{a}'}^{1}|>0$, then by continuity there is a positive number $a'<\bar{a}'$ such that
$|V_{a'}^{1}|>0$.
Hence $V_{a'}^{0}$ must be empty, which contradicts the
definition of $\bar{a}'$.  Therefore $|V_{\bar{a}'}^{1}|=0$ and
thus $|V_{\bar{a}'}|=0$.
%
If $\bar{a}'$ is a multiple of $\delta$, set $\bar{a}=\bar{a}'$.
Otherwise, let $\bar{a}$ be the largest multiple of $\delta$ that is
strictly less than $\bar{a}'$.  If $|V_{\bar{a}}^{1}|>0$, then
$V_{\bar{a}}^{0}$ must be empty, which again contradicts the 
definition of $\bar{a}'$.  Hence $|V_{\bar{a}}^{1}|=0$.
Lastly, by an argument similar to that in Step~2 above, 
$|V_{\bar{a}}^{0}|\leq |V_{\bar{a}}^{0}|-|V_{\bar{a}'}^{0}|\leq \bar{a}'-\bar{a}<
\delta\leq 1$.  It follows
that $|V_{\bar{a}}|\leq 1$.

%

{\bf Step 4.} Next we claim that $-1/2\leq \xi_{\bar{a}}(x)\leq 3/2$ for all $x\in \reals$.
Suppose $\xi_{\bar{a}}(\bar{x})>3/2$ for some $\bar{x}\in \reals$.  Pick
$x_{0}<\bar{x}$ and $x_{1}>\bar{x}$ such that
$|\xi_{\bar{a}}(x_{0})|<\alpha$ and $|1-\xi_{\bar{a}}(x_{1})|<\alpha$.  Set 
$\alpha_{0}=\xi_{\bar{a}}(x_{0})$, $\bar{\alpha}=\xi_{\bar{a}}(\bar{x})$, and
$\alpha_{1}=1-\xi_{\bar{a}}(x_{1})$.  
Using \eqref{ee196} and an estimate 
similar to \eqref{ee200},
one can show that $C\geq 3I_{\alpha}/2$, which contradicts our
assumption about $C$.  
We conclude that
$\xi_{\bar{a}}(x)\leq 3/2$ for all $x\in \reals$.
A similar argument shows that $\xi_{\bar{a}}(x)\geq -1/2$ for all $x\in \reals$.

{\bf Step 5.}  Our final step is to show that there is a $\bar{C}>0$ independent of
$\xi$ such that \eqref{ee195} holds.
We define $\beta$ by the condition that $w(\xi)<\beta$ if and only if
$|\xi-m|<\alpha$ for some $m\in \mathbb{Z}$.  We decompose $\reals$ as
\begin{equation}
  \reals
  =
  (\{ w(\xi_{\bar{a}})\geq \beta \}\cup \{ w(\bar{v})\geq \beta \})
  \cup
  (\{ w(\xi_{\bar{a}})< \beta \}\cap \{ w(\bar{v})< \beta \}).
  \label{ee203}
\end{equation}
We finish the proof by estimating the $L^{2}$ norm of
$\xi_{\bar{a}}-\bar{v}$ on the sets on the right-hand side of
\eqref{ee203}.

First, for the set
$\{ w(\xi_{\bar{a}})\geq \beta \}\cup \{ w(\bar{v})\geq \beta \}$,
we have 
\begin{equation*}
  |\{ w(\xi_{\bar{a}})\geq \beta \}|
  \leq
  \beta^{-1}\int_{-\infty}^{\infty}\!w(\xi_{\bar{a}}) 
  =
  \beta^{-1}\int_{-\infty}^{\infty}\!w(\xi)
  \leq
  \beta^{-1}C
\end{equation*}
and, likewise, we have
$|\{ w(\bar{v})\geq \beta \}|\leq \beta^{-1}\tilde{C}$,
where $\tilde{C}$ does not depend on $\xi$.
Because $\xi_{\bar{a}}$ and $\bar{v}$ are both bounded, it follows
that the $L^{2}$ norm of $\xi_{\bar{a}}-\bar{v}$ over $\{
w(\xi_{\bar{a}})\geq \beta \}\cup \{ w(\bar{v})\geq \beta \}$ is
bounded by a constant that depends only on $C$.

Returning to the right-hand side of
\eqref{ee203}, we next note that
\begin{align}
  \{ w(\xi_{\bar{a}})< \beta \}\cap \{ w(\bar{v})< \beta \}
  &=
  (\{ |\xi_{\bar{a}}|<\alpha \}\cap \{ |\bar{v}|<\alpha  \}) 
  \cup
  (\{ |1-\xi_{\bar{a}}|<\alpha \}\cap \{ |1-\bar{v}|<\alpha  \})
  \cup
  V_{\bar{a}}.
  \label{ee204}
\end{align}
By Steps 3 and 4 above,
$\|\xi_{\bar{a}}-\bar{v}\|_{L^{2}(V_{\bar{a}})}$
is bounded by a constant that depends only on $C$.
We set $U_{0}=\{ |\xi_{\bar{a}}|<\alpha \}\cap \{ |\bar{v}|<\alpha  \}$
and
$U_{1}=\{ |1-\xi_{\bar{a}}|<\alpha \}\cap \{ |1-\bar{v}|<\alpha \}$.
%
Our assumptions on $w$ imply that there is a constant $\bar{w}$ such
that $w(z)\geq\bar{w}(z-m)^{2}$ if $|z-m|<\alpha$ for some
$m\in\mathbb{Z}$.  Using this, we have
\begin{align*}
  \|\xi_{\bar{a}}-\bar{v}\|_{L^{2}(U_{0})}
  \leq
  \|\xi_{\bar{a}}\|_{L^{2}(U_{0})}
  +
  \|\bar{v}\|_{L^{2}(U_{0})}
  \leq
  \bar{w}^{-1/2}\Biggl(\underset{U_{0}}{\int}\!w(\xi_{\bar{a}})\Biggr)^{1/2}
  +
  \bar{w}^{-1/2}\Biggl(\underset{U_{0}}{\int}\!w(\bar{v})\Biggr)^{1/2},  
\end{align*}
and therefore $\|\xi_{\bar{a}}-\bar{v}\|_{L^{2}(U_{0})}$ is bounded by
a constant that depends only on $C$.  The estimate on
\mbox{$\|\xi_{\bar{a}}-\bar{v}\|_{L^{2}(U_{1})}$} is similar.

We have shown that 
the $L^{2}$ norm of $\xi_{\bar{a}}-\bar{v}$
on each of the sets in the decomposition of $\reals$ given by   
\eqref{ee203} and \eqref{ee204} is bounded by a constant that depends
only on $C$ and $\delta$.
This completes Step 5.

\hfill $\Box$


Note that 
the energy $E_{\delta}[\xi]$ of $\xi\in A_{\delta}$ is invariant if
$\xi$ is horizontally translated by a multiple of $\delta$.  
When considering compactness, one could construct a sequence of
functions each of which is a translation of a given function such that
the sizes of the translations go to $\infty$.   This sequence would have
uniformly bounded energy but would fail to have a convergent
subsequence in $A$.  This observation tells us that, as a consequence
of translation invariance,  we cannot prove a
standard compactness result for the energies $E_{\delta}$.  Instead, we use the previous lemma
to establish a modified compactness result.
Let $\{ \xi_{n} \}$ be a sequence of functions in $A$ and let $\{
\delta_{n} \}$ be a bounded sequence of positive numbers.
Suppose $E_{\delta_{n}}[\xi_{n}]\leq C$,
where we assume that $p(1)<C<\frac{3}{2}I_{\alpha}$.
Using Lemma~\ref{r8}, we horizontally translate each function $\xi_{n}$ by
$a_{n}$, where $a_{n}$ is the constant from the lemma.  We continue to
denote this
sequence of translated functions by $\{ \xi_{n} \}$.  
Every function in the sequence satisfies \eqref{ee195}, where $\bar{C}$
depends only on $C$.
Because $H^{1}(\reals)$ is reflexive, it follows easily
that there is a subsequence $\{ \xi_{n_{k}} \}$ and $\bar{\xi}\in A$
such that $\xi_{n_{k}}$ converges weakly to $\bar{\xi}$ in $A$.

We can interpret the previous compactness result as follows.  Each
function in the initial sequence is in the space $A$.  Hence each has
a transition region, where the displacements change from values near 0
to values near 1; see the proof of Lemma~\ref{r8}.  Because of the
translation invariance in our problem, we cannot expect to extract a
subsequence for which the locations of these transition regions
converge to a fixed position in space.  As an alternative, we can
consider an observer who is attached near the center of the transition
region of each function.  This observer sees the profiles of the
transition regions.  Our compactness result shows that this observer
can select a subsequence of functions such that the corresponding
profiles converge weakly in $A$.

We now state and prove the main result of this section, which is  a $\Gamma$-convergence result for the
energies $E_{\delta}$. 
\begin{theorem}
Let $E_{\delta}$ be defined as in \eqref{ee11} and let $F$ be defined
as in \eqref{ee7}.  Let $\delta_{n}\rightarrow 0^{+}$ as
$n\rightarrow\infty$.  Then $E_{\delta_{n}}$ $\Gamma$-converges to $F$
in the weak topology of $A$.  \label{r1}
\end{theorem}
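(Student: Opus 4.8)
The plan is to verify the two sequential conditions defining $\Gamma$-convergence with respect to the weak topology on $A$: the $\liminf$ inequality, that $\liminf_n E_{\delta_n}[\xi_n]\geq F[\xi]$ whenever $\xi_n\rightharpoonup\xi$ in $A$, and the existence of a recovery sequence, that for each $\xi\in A$ there is $\xi_n\rightharpoonup\xi$ in $A$ with $\limsup_n E_{\delta_n}[\xi_n]\leq F[\xi]$. The structural observation that organizes everything is that $E_{\delta}$ and $F$ are given by the \emph{same} integral expression and differ only in that $E_{\delta}=\infty$ off $A_{\delta}$; hence $E_{\delta}[\xi]\geq F[\xi]$ for every $\xi\in A$, with equality on $A_{\delta}$.

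Because $E_{\delta_n}[\xi_n]\geq F[\xi_n]$ for each $n$, the $\liminf$ inequality reduces to weak lower semicontinuity of $F$ on $A$, i.e.\ $\liminf_n F[\xi_n]\geq F[\xi]$. For the gradient term, $\xi_n-\bar v\rightharpoonup\xi-\bar v$ in $H^1(\reals)$ gives $\xi_n'\rightharpoonup\xi'$ in $L^2(\reals)$, so $\liminf_n\int(\xi_n')^2\geq\int(\xi')^2$ by weak lower semicontinuity of the $L^2$ norm. For the potential term, the weak topology on $A$ yields strong convergence $\|\xi_n-\xi\|_2\to 0$ (as noted after \eqref{ee21}); passing to a subsequence realizing the $\liminf$ and a further subsequence along which $\xi_n\to\xi$ a.e., continuity of $w$ together with $w\geq 0$ let Fatou's lemma give $\liminf_n\int w(\xi_n)\geq\int w(\xi)$. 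Adding the two bounds (using superadditivity of $\liminf$) yields $\liminf_n F[\xi_n]\geq F[\xi]$, covering also the case $F[\xi]=\infty$.

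For the recovery sequence I would take $\xi_n$ to be the continuous piecewise affine interpolant of $\xi$ on the partition $P_{\delta_n}$; this is well defined since $\xi-\bar v\in H^1(\reals)$ is continuous and $\bar v$ is continuous, and one checks directly that $\xi_n\in A_{\delta_n}$, so that $E_{\delta_n}[\xi_n]=F[\xi_n]$. The derivative $\xi_n'$ is constant on each cell and equal to the cell average of $\xi'$, i.e.\ $\xi_n'$ is the orthogonal $L^2$-projection of $\xi'$ onto the functions piecewise constant on $P_{\delta_n}$; hence $\|\xi_n'\|_2\leq\|\xi'\|_2$, $\xi_n'\to\xi'$ strongly in $L^2(\reals)$ as $\delta_n\to 0$, and $\int(\xi_n')^2\to\int(\xi')^2$. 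A standard interpolation (Poincar\'e) estimate gives $\|\xi_n-\xi\|_{2}\leq(\delta_n/\pi)\|\xi'\|_2\to 0$, so $\xi_n\to\xi$ strongly in $L^2$ and $\xi_n-\bar v\rightharpoonup\xi-\bar v$ in $H^1$, that is $\xi_n\rightharpoonup\xi$ in $A$. (If $F[\xi]=\infty$ the recovery inequality is trivial, so I assume $F[\xi]<\infty$.)

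It remains to prove $\int w(\xi_n)\to\int w(\xi)$, and this tail estimate on the unbounded domain is the main obstacle, since $\xi_n\to\xi$ in $L^2$ does not directly control $\int_{\reals}|w(\xi_n)-w(\xi)|$. I would split $\reals$ into $[-R,R]$ and its complement. On $[-R,R]$ the interpolant converges to $\xi$ uniformly, so $w(\xi_n)\to w(\xi)$ pointwise with $w$ bounded, and dominated convergence gives $\int_{-R}^{R}w(\xi_n)\to\int_{-R}^{R}w(\xi)$. On the tails, for $R$ and $n$ large every nodal value, and hence (as a convex combination) every value of $\xi_n$, lies within $\alpha$ of the integer limit ($1$ at $+\infty$, $0$ at $-\infty$); the double zero of $w\in C^2$ at the integers supplies an upper quadratic bound $w(z)\leq \bar{W}(z-m)^2$ there, whence $\int_{R}^{\infty}w(\xi_n)\leq\bar{W}\int_{R}^{\infty}(\xi_n-1)^2\leq 2\bar{W}\bigl(\|\xi_n-\xi\|_2^2+\int_R^{\infty}(\xi-1)^2\bigr)$, which is small uniformly in $n$ once $R$ is large because $\xi-1\in L^2([0,\infty))$; the $-\infty$ tail is symmetric. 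Sending $n\to\infty$ and then $R\to\infty$ gives $\int w(\xi_n)\to\int w(\xi)$, so $E_{\delta_n}[\xi_n]=F[\xi_n]\to F[\xi]$, which completes the recovery inequality and hence the proof. Note that Lemma~\ref{r8} is not needed here; it enters separately for equicoercivity and convergence of minimizers.
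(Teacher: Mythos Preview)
Your proof is correct and complete, but it follows a somewhat different route from the paper's. The paper proves a single lemma---that $\int w(\xi_n)\to\int w(\xi^*)$ whenever $\|\xi_n-\xi^*\|_2\to 0$---via a second-order Taylor expansion of $w$ about $\xi^*$, using $w'(0)=w'(1)=0$ to place $w'(\xi^*)$ in $L^2$ on each half-line; this single continuity statement then serves both the liminf and the recovery steps. You instead treat the two halves separately: for the liminf you invoke Fatou along an a.e.\ convergent subsequence (shorter, and enough since only an inequality is needed), while for the recovery sequence you control the potential term by a compact/tail splitting, using uniform convergence of the interpolant on $[-R,R]$ together with the quadratic upper bound on $w$ near the integers and $\xi-1\in L^2([0,\infty))$. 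Your observation that $\xi_n'$ is the $L^2$-orthogonal projection of $\xi'$ onto piecewise constants is a clean way to obtain both $\|\xi_n'\|_2\le\|\xi'\|_2$ (the paper gets this via Jensen) and strong $L^2$ convergence of derivatives in one stroke; the paper instead verifies weak $L^2$ convergence of $\xi_n'$ by a direct duality argument. Net effect: the paper's Taylor argument is more economical because it is reused, while your modular approach avoids the slightly delicate $L^2$ bound on $w'(\xi^*)$ and makes the role of the integer wells at $\pm\infty$ more transparent.
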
  

\noindent \textit{Proof}.
We first prove the liminf inequality.  Let $\xi^{*}\in A$ and
consider a sequence $\{ \xi_{n} \}\subset A$ that converges to
$\xi^{*}$ weakly in $A$.  For the elastic term in $E_{\delta}$, we
note that because $\{\xi_{n} \}$ converges weakly to $\xi^{*}$ in $A$,
$\{\xi_{n}' \}$ converges weakly to $(\xi^{*})'$ in $L^{2}(\reals)$.
Hence
\begin{equation}
  \liminf_{n\rightarrow \infty} \int_{-\infty}^\infty \xi_{n}^{\prime}(x)^{2}\,dx
  \geq
  \int_{-\infty}^\infty  (\xi^{*})^{\prime}(x)^2\,dx. \label{ee16}
\end{equation}
by the weak lower semicontinuity of the $L^{2}$ norm.

Next we show that
\begin{equation}
  \int_{-\infty}^{\infty}\!w\bigl(\xi_{n}(x)\bigr) \,dx
  \rightarrow
  \int_{-\infty}^{\infty}\!w\bigl(\xi^{*}(x)\bigr) \,dx
  \quad
  \text{as}
  \quad
  n\rightarrow \infty, 
  \label{ee27}
\end{equation}
which combined with \eqref{ee16} establishes the liminf inequality.
To estimate
$
  \left|
  \int_{0}^{\infty}\!w\bigl(\xi_{n}\bigr) 
  -
  \int_{0}^{\infty}\!w\bigl(\xi^{*}\bigr) 
  \right|,
$
we expand
\begin{equation}
  w\bigl(\xi_{n}(x)\bigr)
  -
  w\bigl(\xi^{*}(x)\bigr)
  =
  w'\bigl(\xi^{*}(x)\bigr)
  \bigl(\xi_{n}(x)-\xi^{*}(x)\bigr)
  +
  \frac{1}{2}w''\bigl(\xi_{x}\bigr)
  \bigl(\xi_{n}(x)-\xi^{*}(x)\bigr)^{2},
  \label{ee30}
\end{equation}
where $\xi_{x}$ is a number between $\xi^{*}(x)$ and $\xi_{n}(x)$.
For the first term on the right-hand side of \eqref{ee30}, we have
\begin{equation*}
  \left|
  \int_{0}^{\infty}\! 
  w'\bigl(\xi^{*}(x)\bigr)
  \bigl(\xi_{n}(x)-\xi^{*}(x)\bigr) \,dx
  \right|^{2}
  \leq
  \int_{0}^{\infty}\! 
  w'\bigl(\xi^{*}(x)\bigr)^{2}\,dx
  \int_{0}^{\infty}\! 
  \bigl(\xi_{n}(x)-\xi^{*}(x)\bigr)^{2} \,dx.  
\end{equation*}
We next expand
\begin{equation}
  w'\bigl(\xi^{*}(x)\bigr)
  =
  w''\bigl(\hat{\xi}_{x}\bigr)
  \bigl(\xi^{*}(x)-1\bigr),
  \label{ee32}
\end{equation}
where $\hat{\xi}_{x}$ is between $\xi^{*}(x)$ and 1 and where we use that 
$w'\bigl(1\bigr)=0$.  
By the triangle inequality, we have
\begin{equation}
  \left[
  \int_{0}^{\infty}\! 
  \bigl(\xi^{*}(x)-1\bigr)^{2}\,dx
  \right]^{1/2}
  \leq
  \left[
  \int_{0}^{\infty}\! 
  \bigl(\xi^{*}(x)-\bar{v}(x)\bigr)^{2}\,dx
  \right]^{1/2}
  +
  \left[
  \int_{0}^{\infty}\! 
  \bigl(\bar{v}(x)-1\bigr)^{2}\,dx
  \right]^{1/2}.
  \label{ee33}
\end{equation}
The first term on the right-hand side is finite because $\xi^{*}\in
A$, and the second term is finite by our assumptions on $\bar{v}$.

From \eqref{ee30}--\eqref{ee33}, it follows that there is a
constant $C$ such that
\begin{equation*}
  \left|
  \int_{0}^{\infty}\!w\bigl(\xi_{n}(x)\bigr) \,dx
  -
  \int_{0}^{\infty}\!w\bigl(\xi^{*}(x)\bigr) \,dx
  \right|
  \leq
  C
  \Biggl[
  \int_{0}^{\infty}\! 
  \bigl(\xi_{n}(x)-\xi^{*}(x)\bigr)^{2} \,dx
  \Biggr]^{1/2}.
\end{equation*}
We can make a similar estimate for 
\begin{equation*}
  \left|
  \int_{-\infty}^{0}\!w\bigl(\xi_{n}(x)\bigr) \,dx
  -
  \int_{-\infty}^{0}\!w\bigl(\xi^{*}(x)\bigr) \,dx
  \right|.
\end{equation*}
(In this case the right-hand side of \eqref{ee32} is 
$w''\bigl(\hat{\xi}_{x}\bigr) \xi^{*}(x)$ and the second term on the
right-hand side of \eqref{ee33} is $\int_{-\infty}^{0}\!\bar{v}^{2} $.)
Combining these estimates and using that
$\|\xi_{n}-\xi^*\|_{2}\rightarrow 0$ 
establishes \eqref{ee27}.

Now we prove the recovery sequence argument.  Let $\xi^{*}\in A$.
Following a standard construction (see \cite{braides2002gamma}), for
$n\in \mathbb{Z}$ we define $\xi_{n}$ by setting
$\xi_{n}(i\delta_{n})=\xi^{*}(i\delta_{n})$ for $i\in \mathbb{Z}$
and requiring that
$\xi_{n}\in A_{\delta_{n}}$.  Hence $\xi_{n}$ is continuous and
piecewise affine with respect to the partition $P_{\delta_{n}}$.

We claim that $\xi_{n}$ converges to $\xi^{*}$ weakly in $A$.
First we show that
$\xi_{n}-\bar{v}$ converges to
$\xi^{*}-\bar{v}$ in $L^{2}(\reals)$.
If $x\in [(i-1)\delta_{n},i\delta_{n}]$, then
$\xi_{n}(x)$ is between $\xi^{*}((i-1)\delta_{n})$ and $\xi^{*}(i\delta_{n})$.
Hence for $i\in\mathbb{Z}$, we have 
\begin{align*}
  \int_{(i-1)\delta_{n}}^{i\delta_{n}}\!|\xi^{*}(x)&-\xi_{n}(x)|^{2} \,dx   \nonumber \\
  &\leq
  \int_{(i-1)\delta_{n}}^{i\delta_{n}}\!|\xi^{*}(x)-\xi^{*}((i-1)\delta_{n})|^{2} \,dx
  +
  \int_{(i-1)\delta_{n}}^{i\delta_{n}}\!|\xi^{*}(x)-\xi^{*}(i\delta_{n})|^{2} \,dx
  \nonumber \\
  &\leq
  \int_{(i-2)\delta_{n}}^{i\delta_{n}}\!|\xi^{*}(x)-\xi^{*}((i-1)\delta_{n})|^{2} \,dx
  +
  \int_{(i-1)\delta_{n}}^{(i+1)\delta_{n}}\!|\xi^{*}(x)-\xi^{*}(i\delta_{n})|^{2} \,dx,
\end{align*}
which implies that
\begin{equation}
  \int_{-\infty}^{\infty}\!|\xi^{*}(x) - \xi_{n}(x)|^{2} \,dx
  \leq
  2 \sum_{i=-\infty}^{\infty}
  \int_{(i-1)\delta_{n}}^{(i+1)\delta_{n}}\!|\xi^{*}(x)-\xi^{*}(i\delta_{n})|^{2} \,dx.
  \label{ee157}
\end{equation}
Estimating a typical term in the sum on the right-hand side of
\eqref{ee157} gives
\begin{align*}
  \int_{(i-1)\delta_{n}}^{(i+1)\delta_{n}}\! &
  |\xi^{*}(x)-\xi^{*}(i\delta_{n})|^{2} \,dx 
  =
  \int_{(i-1)\delta_{n}}^{(i+1)\delta_{n}}\!
  \left|
  \int_{i\delta_{n}}^{x}\!(\xi^{*})'(y) \,dy
  \right|^{2} \,dx \nonumber \\
  &\leq
  \delta_{n} \int_{(i-1)\delta_{n}}^{(i+1)\delta_{n}}\!
  \left|\int_{i\delta_{n}}^{x}\! (\xi^{*})'(y)^{2}\,dy\right|
  dx \nonumber \\
  &=
  \delta_{n}
  \left[
    \int_{(i-1)\delta_{n}}^{i\delta_{n}}\!
    \int_{(i-1)\delta_{n}}^{y}\!(\xi^{*})'(y)^{2} \,dx\,
    dy
    +
    \int_{i\delta_{n}}^{(i+1)\delta_{n}}\!
    \int_{y}^{(i+1)\delta_{n}}\!(\xi^{*})'(y)^{2} \,dx\,
    dy
  \right] \nonumber \\
  &\leq
  \delta_{n}^{2}
  \int_{(i-1)\delta_{n}}^{(i+1)\delta_{n}}\!(\xi^{*})'(y)^{2} \,dy.
\end{align*}
By using this estimate in \eqref{ee157}, we arrive at
\begin{equation}
  \int_{-\infty}^{\infty}\!|\xi^{*}(x) - \xi_{n}(x)|^{2} \,dx
  \leq
  4 \delta_{n}^{2} 
  \int_{-\infty}^{\infty}\!(\xi^{*})'(y)^{2}\,dy.
  \label{ee159}
\end{equation}
We observe that $(\xi^{*})'\in L^{2}(\reals)$ because
$(\xi^{*})'-\bar{v}'$, $\bar{v}'\in L^{2}(\reals)$.
Hence it follows from \eqref{ee159} that 
$\xi_{n}-\bar{v}$ converges to
$\xi^{*}-\bar{v}$ in $L^{2}(\reals)$.


It remains to show that $\xi_{n}'-\bar{v}'$ converges to
$(\xi^{*})'-\bar{v}'$ weakly in $L^{2}(\reals)$.
Let $g\in L^{2}(\reals)$ and let $\eta>0$.   Pick
$h\in C_{0}^{\infty}(\reals)$ such that $\|g-h\|_{2}<\eta$.
We have
\begin{align}
  \left|
  \int_{-\infty}^{\infty}\![(\xi^{*})'- \xi_{n}']g 
  \right|
  &=
  \left|
  \int_{-\infty}^{\infty}\![(\xi^{*})'- \xi_{n}']h
  +
  \int_{-\infty}^{\infty}\![(\xi^{*})'- \xi_{n}'](g-h)  
  \right| \nonumber \\
  &\leq
  \|\xi^{*}- \xi_{n}\|_{2} \|h'\|_{2}
  +
  \|(\xi^{*})'- \xi_{n}'\|_{2} \|g-h\|_{2} \nonumber \\
  &\leq
  \|\xi^{*}- \xi_{n}\|_{2} \|h'\|_{2}
  +
  2\|(\xi^{*})'\|_{2}\eta.
\label{ee161}
\end{align}
Note that the final inequality uses \eqref{ee47}, which is established
below.  The first term on the right-hand side of \eqref{ee161} goes to
0 as $n\rightarrow \infty$.  Since $\eta>0$ is arbitrary, we
conclude that $\underset{n\rightarrow \infty}{\lim}
\int_{-\infty}^{\infty}[(\xi^{*})'- \xi_{n}']g=0$.

Now we consider $\limsup E_{\delta_{n}}[\xi_{n}]$.   
We use an easy argument
from \cite{braides2002gamma} for the elastic energy.
For each $i\in \mathbb{Z}$, we have
\begin{align}
  \int_{i\delta_{n}}^{(i+1)\delta_{n}}\!(\xi^{*})'(x)^{2}\,dx
  &=
  \frac{1}{\delta_{n}}\int_{i\delta_{n}}^{(i+1)\delta_{n}}\!(\delta_{n}^{1/2}(\xi^{*})'(x))^{2}\,dx
  \nonumber\\  
  &\geq
  \biggl(
    \delta_{n}^{-1/2}\int_{i\delta_{n}}^{(i+1)\delta_{n}}\!(\xi^{*})'(x)\,dx
  \biggr)^{2}\nonumber \\[2mm]
  &=\delta_{n}^{-1}\bigl(\xi^{*}((i+1)\delta_{n})-\xi^{*}(i\delta_{n})\bigr)^{2}
  \label{ee48}\\[2mm]
  &=\delta_{n}^{-1}\bigl(\xi_{n}((i+1)\delta_{n})-\xi_{n}(i\delta_{n})\bigr)^{2}
  \nonumber \\[2mm]
  &=\delta_{n}\left(\frac{\xi_{n}((i+1)\delta_{n})-\xi_{n}(i\delta_{n})}{\delta_{n}}\right)^{2}
  %
  =
  \int_{i\delta_{n}}^{(i+1)\delta_{n}}\! \xi_{n}'(x)^{2}\,dx.    
  \nonumber 
\end{align}
Note that the second line in \eqref{ee48} uses Jensen's Inequality.
Summing over all $i\in \mathbb{Z}$ gives
\begin{equation}
  \int_{-\infty}^{\infty}\! (\xi^{*})'(x)^{2}\,dx
  \geq
  \int_{-\infty}^{\infty}\! \xi_{n}'(x)^{2}\,dx.
  \label{ee47}
\end{equation}

For the weak interaction energy, it is sufficient to observe that the
recovery sequence $\{ \xi_{n}\}$ constucted above converges weakly to
$\xi^{*}$ in $A$.  Hence the same proof we used above for the liminf
inequality gives us \eqref{ee27}.

\hfill $\Box$

\section{Convergence to a Macroscale Model} \label{s5}

As noted in the introduction, we can view the limiting process in which
$\delta\rightarrow 0$ as
taking the system
from the atomic or microscale to a mesoscale associated with a single
domain wall.
In this section, we introduce another small parameter into the
discrete energy \eqref{ee6}, and we prove a convergence result for
this modified version of the discrete energies.
The associated limiting process of letting this additional small
parameter go to $0$ takes the system from the mesoscale to a
macroscale associated with multiple domain walls.

To motivate this new rescaling and the related limiting process, we let
$\varepsilon>0$ be a small parameter and make the change of variables
$\hat{x}=\varepsilon x$ in the limiting energy \eqref{ee7}.  This
yields
\begin{align}
  \int_{-\infty}^{\infty}\!
  &
   \xi'(\hat{x}/\varepsilon)^{2} \varepsilon^{-1} \,d\hat{x}
  +
  \int_{-\infty}^{\infty}\!
  w(\xi(\hat{x}/\varepsilon)) \varepsilon^{-1} \,d\hat{x} \nonumber \\
  &=
  \varepsilon 
  \int_{-\infty}^{\infty}\!
   \hat{\xi}'(\hat{x})^{2}  \,d\hat{x}
  +
  \varepsilon^{-1}\int_{-\infty}^{\infty}\!
  w(\hat{\xi}(\hat{x})) \,d\hat{x},  
  \label{ee154}
\end{align}
where $\hat{\xi}(\hat{x})\leqr\xi(\hat{x}/\varepsilon)$.  Based upon this,
we define the energy $F_{\varepsilon}$ by
\begin{equation}
  F_{\varepsilon}[\xi]
  \leqr
  \varepsilon\int_{-\infty}^{\infty}\! \xi'(x)^{2} \,dx
  +
  \varepsilon^{-1}\int_{-\infty}^{\infty}\!w(\xi(x)) \,dx.
  \label{ee18}
\end{equation}
Based on standard results in the literature, we expect that 
$\{F_{\varepsilon} \}$ $\Gamma$-converges as $\varepsilon\rightarrow 0$.
In Section~\ref{s3}, we
showed that \eqref{ee7} is the $\Gamma$-limit of the energy
\eqref{ee11} as $\delta\rightarrow 0$.  Recall that the energy
\eqref{ee11} is a continuum version of the discrete energy
\eqref{ee6}.
%
%
These observations suggest that introducing
$\varepsilon$ into the discrete energy \eqref{ee6} in a way analogous
to the rescaling that takes us from \eqref{ee7} to \eqref{ee18}
will give a discrete energy depending on the 2 small
parameters $\delta$ and $\varepsilon$ that should $\Gamma$-converge 
%
%
as these parameters go to $0$.  In this section,
we formulate and prove this result.

As mentioned, introducing $\varepsilon$ and allowing
$\varepsilon\rightarrow 0$ takes the system from the mesoscopic to
the macroscopic scale.
The scalings we have considered provide two ways of carrying the
system from the microscale to the macroscale.
On the one hand, we can go from micro to macro in two steps.  The first step is
from the macroscale to a mesoscale by the result presented in the
previous section.  The second
step is from this mesoscale to the macroscale by intoducing $\varepsilon$ in the limiting
continuum energy \eqref{ee7} from the first part of this paper and then letting
$\varepsilon\rightarrow 0$.

On the other hand, a second way to go directly from micro to macro 
is by introducing
$\varepsilon$ in the discrete energies \eqref{ee6} and letting $\varepsilon$,
$\delta$ both $\rightarrow 0$.
To rewrite the discrete energy appropriately, we multiply
and divide the first term in \eqref{ee6} by $\varepsilon^{2}$ and
the second term in \eqref{ee6} by $\varepsilon$.  This yields
\begin{equation}
  E^{\varepsilon,\delta}[\{ \xi_{i}  \}]
  =
  \varepsilon\sum_{i=-\infty}^{\infty}
  \left(\frac{\xi_{i}-\xi_{i-1}}{\varepsilon\delta}\right)^{2}\varepsilon\delta
    +
  \frac{1}{\varepsilon}\sum_{i=-\infty}^{\infty}
    w(\xi_{i})\varepsilon\delta,
  \label{ee118b}
\end{equation}
where $w$ is defined in \eqref{ee8}.
We think of \eqref{ee118b} as the discrete version of, or a Riemann
sum for, the continuum energy \eqref{ee18}.  Likewise, we think of
\eqref{ee6} as a discrete version of \eqref{ee11}.  From this
perspective, $\delta$ in \eqref{ee6} is like $dx$ in \eqref{ee11} and
$\varepsilon\delta$ in \eqref{ee118b} is like $dx$ in \eqref{ee18}.
This is consistent with the relation between $dx$ in \eqref{ee7} and
$dx$ in \eqref{ee18}, per the derivation in \eqref{ee154}.

Next we define the appropriate spaces for the convergence result of
this section.  In the previous section, we assumed that
the displacement went from $0$ at $-\infty$ to $1$ at $\infty$.  Now
we wish to consider more general displacement boundary conditions.  We
pick $m_{l}, m_{r}\in \mathbb{Z}$, where $m_l\neq m_r$.  We let
$\bar{\bar{v}}$ be a continuously differentiable, strictly monotone
function satisfying 
\begin{gather}
  \bar{\bar{v}}(x)
  \rightarrow 
  \begin{Bmatrix}
   m_{r}\\
   m_{l}
  \end{Bmatrix}
  \ \text{as}\ 
  x\rightarrow 
  \begin{Bmatrix}
   \infty\\
   -\infty
  \end{Bmatrix}, \label{ee2011}
  \\[2mm]
  1-\bar{\bar{v}}\in L^{1}([0,\infty))\cap L^{2}([0,\infty)),
  \
  \bar{\bar{v}}\in L^{1}((-\infty,0])\cap L^{2}((-\infty,0]),
  \
  \bar{\bar{v}}'\in L^{1}(\reals)\cap L^{2}(\reals).
\label{ee2001}
\end{gather}
%
%
For example, we could
choose 
\begin{equation*}
    \bar{\bar{v}}=(m_{r}-m_{l})\bar{v}+m_{l},
\end{equation*}
where $\bar{v}$ is defined
as in \eqref{ee21}.  
Let $\bar{C}$ be a positive constant.
We define the
space
\begin{equation}
  \bar{A}
  :=
  \{ \xi: \xi-\bar{\bar{v}}\in L^{1}(\reals)\
  \text{and}\ 
  \|\xi-\bar{\bar{v}}\|_{L^{1}(\reals)} \leq \bar{C} \}.
  \label{ee112}
\end{equation}
Note that $\bar{A}\subset L^{1}_{\text{loc}}(\reals)$.
We say that the sequence 
$\{\xi_{n}\}$ converges in $\bar{A}$ if 
$\xi_{n}\rightarrow \xi$ in $L^{1}_{\text{loc}}(\reals)$.

We comment on the condition that
$\|\xi-\bar{\bar{v}}\|_{L^{1}(\reals)} \leq \bar{C}$ in the definition
of $\bar{A}$.  This condition is needed because under the more general
assumption \eqref{ee2011} on the limiting behavior of $\bar{\bar{v}}$,
we cannot prove a result analogous to Lemma~\ref{r8}.
%
%
For example, suppose $m_{l}=0$ and $m_{r}=2$ and let $\xi$ be a smooth
increasing function that approaches $0$ and $2$ as $x \rightarrow
-\infty$ and $x \rightarrow \infty$.  Further suppose that $\xi(x)=1$
for $0\leq x\leq b$ with a sharp transition from $0$ to $1$ to the
left of $0$ and a sharp transition from $1$ to $2$ to the right of
$b$.
The right endpoint $b$ can be increased without increasing the energy
$E_{\varepsilon,\delta}[\xi]$.  However, increasing $b$ increases the
minimum over $a$ of the
$L^{1}$ distance between $\bar{\bar{v}}$ and $\xi_{a}$, where
$\xi_{a}$ is the horizontal
translation of $\xi$ by $a$.

Next, define
\begin{equation*}
  \bar{A}_{\rho}
  :=
  \{ \xi: \xi\in \bar{A},\  \xi\ \text{continuous, $\xi$ piecewise affine on $P_{\rho}$}\}
\end{equation*}
for $\rho>0$, where $P_{\rho}$ is the partition defined as in
\eqref{ee20}. 

For any $\varepsilon$, $\delta>0$, we define
$E_{\varepsilon,\delta}:\bar{A}\rightarrow [0,\infty]$
by
\begin{equation}
  E_{\varepsilon,\delta}[\xi]
  :=
  \left\{
  \begin{array}{ll}
    \varepsilon\int_{-\infty}^{\infty}\xi^\prime(x)^{2}\,dx
    +\varepsilon^{-1}\int_{-\infty}^{\infty}w\left(\xi(x)\right)\,dx, & \xi\in A_{\varepsilon\delta}, \\
    \infty, & \xi\in \bar{A}\setminus A_{\varepsilon\delta}.
  \end{array}
  \right. \label{ee60}
\end{equation}
Note that here, as in \eqref{ee11}, we have replaced the discrete
interaction energy in \eqref{ee118b} with a corresponding integral
expression.  By a lemma similar to Lemma~\ref{l:0}, we could prove here that
the difference in the energy
\eqref{ee118b} and the energy on the top line of
\eqref{ee60}
goes to $0$ as $\epsilon$, $\delta \rightarrow 0$.
Next we define
\begin{equation}
  E[\xi]
  :=
  \left\{
  \begin{array}{ll}
    \operatorname{Var}(\xi,\reals)\bar{p},
    & \xi\in BV(\reals;\mathbb{Z})\cap \bar{A},\\ 
    \infty, & \xi\in \bar{A}\setminus BV(\reals;\mathbb{Z}).
  \end{array}
  \right. \label{ee61}
\end{equation}
Here $\bar{p}=p(1)$ where $p$ is defined in \eqref{ee194}.


We establish a compactness result in $\bar{A}$.
Let $\varepsilon_{n}, \delta_{n}\rightarrow 0^{+}$ as $n\rightarrow
\infty$. 
Let $\{\xi_{n}\}\subset \bar{A}$ such that 
\begin{equation*}
  E_{\varepsilon_{n},\delta_{n}}[\xi_{n}]\leq C,
\end{equation*}
where $C$ is a positive constant.
We show that
$\{\xi_{n}\}$
has a subsequence that converges in $\bar{A}$ to a function
$\xi^{*}\in BV(\reals;\mathbb{Z})\cap\bar{A}$.

%
The first step is to apply a standard BV compactness result to the
sequence
$\{p(\xi_{n})-p(\bar{\bar{v}})\}$.  
Note that $p'$ is bounded on $\reals$ and hence there is a constant
$M$ such that
$|p(\xi_{n}(x))-p(\bar{\bar{v}}(x))|
  \leq
  M|\xi_{n}(x)-\bar{\bar{v}}(x)|$.
Using this and \eqref{ee112}, we see that for any $n$
\begin{equation}
  \|p(\xi_{n})-p(\bar{\bar{v}})\|_{L^{1}(\reals)} \leq M\bar{C}.
  \label{ee71}
\end{equation}

We next seek an estimate like \eqref{ee71} but on the derivative of
$p(\xi_{n})-p(\bar{\bar{v}})$.
We have
\begin{align*}
  \int_{-\infty}^{\infty}\!
  \left|
  \frac{d}{dx}p(\xi_{n})
  \right|\,dx
  &=
  \int_{-\infty}^{\infty}\!
  \left|
  2\sqrt{w(\xi_{n})}\xi'_{n}
  \right|\,dx \nonumber \\
  &\leq
  \varepsilon_{n}\int_{-\infty}^{\infty}\xi_{n}^\prime(x)^{2}\,dx
  +
  \varepsilon_{n}^{-1}\int_{-\infty}^{\infty}w\left(\xi_{n}(x)\right)\,dx
  \nonumber \\
  &\leq
  C.
\end{align*}
Hence
\begin{equation*}
  \left\|\frac{d}{dx}\{p(\xi_{n})-p(\bar{\bar{v}})\}\right\|_{L^{1}(\reals)}
  \leq
  \left\|\frac{d}{dx}p(\xi_{n})\right\|_{L^{1}(\reals)}
  +
  \left\|\frac{d}{dx}p(\bar{\bar{v}})\right\|_{L^{1}(\reals)}
  \leq C  +
  M\left\|\bar{\bar{v}}'\right\|_{L^{1}(\reals)}.
\end{equation*}
A standard compactness result for BV functions
\cite{gariepy2001functions}
implies that there is a subsequence $\{\xi_{n_{k}}\}$ and a
function $h^{*}\in BV(\reals)$ such that
$p(\xi_{n_{k}})-p(\bar{\bar{v}})\rightarrow h^{*}$
in
$L^{1}_{\text{loc}}(\reals)$ as $k\rightarrow \infty$.
Hence
$p(\xi_{n_{k}})\rightarrow g^{*}\leqr h^{*}+p(\bar{\bar{v}})$
in $L^{1}_{\text{loc}}(\reals)$.
An easy argument shows that by
passing to a subsequence we can assume that
$p(\xi_{n_{k}})\rightarrow g^{*}$ a.e\@. on $\reals$.  Because $p$ is
one-to-one, there is a $\xi^{*}(x)$
such that $g^{*}(x)=p(\xi^{*}(x))$.  Therefore
$p(\xi_{n_{k}})\rightarrow p(\xi^{*})$, which implies that 
$\xi_{n_{k}}\rightarrow \xi^{*}$ a.e. on $\reals$.

We show that $\xi_{n_{k}}\rightarrow \xi^{*}$ in
$L^{1}_{\text{loc}}(\reals)$.  Because $p'$ is periodic, $p'\geq 0$,
and $p'=0$ only at isolated points, it follows that for any $\mu>0$,
there is a $\gamma>0$ such that
\begin{equation*}
  |p(y)-p(y')|<\gamma|y-y'|
  \quad
  \text{implies that}
  \quad
  |y-y'|<\mu.
\end{equation*}
Let $D$ be a compact set in $\reals$.  Given $k\in \mathbb{N}$, define
\begin{equation*}
  A_{k}
  =
  \{ x\in D
  \ :\
  |p(\xi_{n_{k}}
(x))-p(\xi^{*}(x))|
  <
  \gamma|\xi_{n_{k}}
(x)-\xi^{*}(x)|  \}.
\end{equation*}
Then
\begin{align*}
  \int_{D}\! |\xi_{n_{k}}
(x)-\xi^{*}(x)|\,dx
  &=
  \int_{A_{k}}\! |\xi_{n_{k}}
(x)-\xi^{*}(x)|\,dx
  +
  \int_{D-A_{k}}\! |\xi_{n_{k}}
(x)-\xi^{*}(x)|\,dx \nonumber\\[2mm]
  &\leq
  \int_{A_{k}}\! \mu\,dx
  +
  \int_{D-A_{k}}\! \gamma^{-1}|p(\xi_{n_{k}}
(x))-p(\xi^{*}(x))|\,dx 
\\[2mm]
  &\leq
  \mu|D|+\gamma^{-1}\|p(\xi_{n_{k}}
)-p(\xi^{*})\|_{L^{1}(D)},
  \nonumber  
\end{align*}
which implies that 
\begin{equation*}
  \limsup \int_{D}\! |\xi_{n_{k}}
(x)-\xi^{*}(x)|\,dx
  \leq
  \mu|D|.
\end{equation*}
Because $\mu>0$ is arbitrary, it follows that 
$\|\xi_{n_{k}}
-\xi^{*}\|_{L^{1}(D)}\rightarrow 0$.  

We know that
$\int_{-\infty}^{\infty}\!w(\xi_{n_{k}}) \,dx\leq
\epsilon_{n_{k}}C\rightarrow 0$
as $k\rightarrow \infty$.  By passing to another subsequence, we can
assume that $w(\xi_{n_{k}})\rightarrow 0$ a.e.\@ on $\reals$.  It
follows that $w(\xi^{*})=0$ a.e.\@ and hence $\xi^{*}(x)\in
\mathbb{Z}$ a.e.  
Furthermore, 
because $g^{*}\in BV(\reals;p\left(\mathbb{Z})\right)$, 
there are numbers $t_{1}<\cdots<t_{n}$ and integers $z_{i}$ such
that 
$g^{*}=\sum_{i=1}^{n+1}p(z_{i})\chi_{(t_{i-1},t_{i})}$ with
$t_{0}=-\infty$ and $t_{n+1}=\infty$.
Therefore $\xi^{*}=\sum_{i=1}^{n+1}z_{i}\chi_{(t_{i-1},t_{i})}$, which
tells us that $\xi^{*}\in BV(\reals;\mathbb{Z})$.  

It remains to verify that $\xi^{*}\in \bar{A}$.  
We start by showing that $z_{1}=m_{l}$ and 
$z_{n+1}=m_{r}$.
We can use local $L^{1}$ convergence to construct a
subsequence of $\{ \xi_{n_{k}}  \}$ (still denoted by
$\{ \xi_{n_{k}}\}$) where, for each $k\in \mathbb{N}$, we pick
$n_{k}$ such that
$\int_{t_{n}}^{t_{n}+k}\!|\xi_{n_{k}}-\xi^{*}| < 1/k$.
Then
\begin{align*}
  \bar{C}
  \geq
  \|\xi_{n_{k}}-\bar{\bar{v}}\|_{L^{1}(\reals)}
  \geq
  \int_{t_{n}}^{t_{n}+k}\!|\xi_{n_{k}}-\bar{\bar{v}}|  
  &\geq
  \int_{t_{n}}^{t_{n}+k}\!|\xi^{*}-m_{r}|
  -
  \int_{t_{n}}^{t_{n}+k}\!|\bar{\bar{v}}-m_{r}| 
  -
  \int_{t_{n}}^{t_{n}+k}\!|\xi_{n_{k}}-\xi^{*}| \nonumber \\
  &\geq
  k|z_{n+1}-m_{r}|
  -
  \int_{t_n}^{\infty}\!|\bar{\bar{v}}-m_{r}|
  -
  1/k,
\end{align*}
which by letting $k\rightarrow \infty$ implies that $m_{r}=z_{n+1}$.  Showing that
$z_{1}= m_{l}$ is similar.
Now we verify that
$\|\xi^{*}-\bar{\bar{v}}\|_{L^{1}(\reals)}\leq\bar{C}$.
For $\mu>0$, we can pick $M$ large enough such that
$\int_{-\infty}^{-M}\! |\bar{\bar{v}}-m_{l}|
+
\int_{M}^{\infty}\! |\bar{\bar{v}}-m_{r}|
<\mu
$.
Then
\begin{equation*}
  \int_{-\infty}^{\infty}\!|\xi^{*}-\bar{\bar{v}}|
  \leq
  \int_{-M}^{M}\!|\xi^{*}-\bar{\bar{v}}| + \mu
  \leq
  \int_{-M}^{M}\!|\xi^{*}-\xi_{n_{k}}|
  +
  \int_{-M}^{M}\!|\xi_{n_{k}}-\bar{\bar{v}}| + \mu    
  \leq
  \int_{-M}^{M}\!|\xi^{*}-\xi_{n_{k}}|
  +
  \bar{C} + \mu.    
\end{equation*}
The integral on the right-hand side goes to 0 as $k$ goes to infinity. 
Because $\mu>0$ is arbitrary,
$\|\xi^{*}-\bar{\bar{v}}\|_{L^{1}(\reals)}\leq\bar{C}$ follows.



Now we prove the main result of this section.
\begin{theorem}
Let $E_{\varepsilon,\delta}$ be defined as in \eqref{ee60} and let
$E$ be defined as in \eqref{ee61}.
Let $\varepsilon_{n}, \delta_{n}\rightarrow 0^{+}$ as $n\rightarrow
\infty$.  Then
$E_{\varepsilon_{n},\delta_{n}}$  $\Gamma$-converges to $E$ with
respect to local $L^{1}$ convergence in $\bar{A}$.  
\end{theorem}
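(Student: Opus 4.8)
The plan is to establish the two $\Gamma$-convergence inequalities separately, using the function $p$ from \eqref{ee194} as the central bookkeeping device, just as in the compactness argument preceding the theorem. The one algebraic fact that drives everything is that, since $w$ has period $1$ and vanishes on $\mathbb{Z}$, one has $p(z+1)-p(z)=\bar p$ for every integer $z$; consequently, for any step function $\xi^{*}=\sum_{i}z_{i}\chi_{(t_{i-1},t_{i})}\in BV(\reals;\mathbb{Z})$ we get $\operatorname{Var}(p(\xi^{*}),\reals)=\bar p\,\operatorname{Var}(\xi^{*},\reals)=E[\xi^{*}]$. Thus both halves of the proof reduce to comparing $E_{\varepsilon_{n},\delta_{n}}$ with $\operatorname{Var}(p(\cdot),\reals)$, and the only role of the limit functional \eqref{ee61} is to record this identity.

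For the $\liminf$ inequality, let $\xi_{n}\to\xi^{*}$ in $L^{1}_{\mathrm{loc}}(\reals)$. If $\liminf E_{\varepsilon_{n},\delta_{n}}[\xi_{n}]=\infty$ there is nothing to prove, so I would pass to a subsequence along which the energies are bounded; each $\xi_{n}$ then lies in $A_{\varepsilon_{n}\delta_{n}}$, is absolutely continuous, and the compactness argument above forces the limit $\xi^{*}$ into $BV(\reals;\mathbb{Z})\cap\bar A$. On such $\xi_{n}$, Young's inequality gives $E_{\varepsilon_{n},\delta_{n}}[\xi_{n}]\ge 2\int_{-\infty}^{\infty}|\xi_{n}'|\sqrt{w(\xi_{n})}\,dx=\int_{-\infty}^{\infty}|\frac{d}{dx}p(\xi_{n})|\,dx=\operatorname{Var}(p(\xi_{n}),\reals)$, which is exactly the estimate carried out in \eqref{ee196}. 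Since $p'=2\sqrt{w}$ is bounded, $p$ is globally Lipschitz, so $p(\xi_{n})\to p(\xi^{*})$ in $L^{1}_{\mathrm{loc}}(\reals)$; lower semicontinuity of the total variation under $L^{1}_{\mathrm{loc}}$ convergence then yields $\liminf\operatorname{Var}(p(\xi_{n}),\reals)\ge\operatorname{Var}(p(\xi^{*}),\reals)=E[\xi^{*}]$, as required.

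For the recovery sequence I would fix $\xi^{*}\in BV(\reals;\mathbb{Z})\cap\bar A$ (when $E[\xi^{*}]=\infty$ the constant sequence $\xi_{n}\equiv\xi^{*}$ trivially works) and write $\xi^{*}=\sum_{i=1}^{N+1}z_{i}\chi_{(t_{i-1},t_{i})}$ with $z_{1}=m_{l}$ and $z_{N+1}=m_{r}$. Around each jump $t_{i}$ I would insert a monotone optimal transition layer of width $O(\varepsilon_{n})$ built from the equipartition profile $\varepsilon_{n}\phi'=\pm\sqrt{w(\phi)}$, whose continuum energy tends to $|p(z_{i})-p(z_{i-1})|=\bar p|z_{i}-z_{i-1}|$. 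Because $w$ has infinitely many wells, a layer of height exceeding $1$ must pass through intermediate zeros of $w$, where the profile stalls and the naive profile acquires infinite length; I would therefore truncate each layer to a finite interval at the cost of an error $O(\eta)$ and remove $\eta$ at the end by diagonalization. I would then discretize, replacing each smooth profile by its continuous piecewise affine interpolant on $P_{\varepsilon_{n}\delta_{n}}$ so that $\xi_{n}\in A_{\varepsilon_{n}\delta_{n}}$; since $\delta_{n}\to0$, the grid spacing $\varepsilon_{n}\delta_{n}$ is negligible compared with the layer width $\varepsilon_{n}$, the elastic term of the interpolant is dominated by that of the profile through the Jensen estimate in \eqref{ee48}, and the potential term converges by uniform convergence on compact sets. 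Outside the layers $\xi_{n}$ equals the integers $z_{i}$, so $\xi_{n}\to\xi^{*}$ in $L^{1}(\reals)$, the boundary values at $\pm\infty$ are correct, and $\|\xi_{n}-\bar{\bar{v}}\|_{L^{1}(\reals)}\to\|\xi^{*}-\bar{\bar{v}}\|_{L^{1}(\reals)}\le\bar C$.

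The main obstacle is this recovery construction, and within it the interplay of three disparate scales together with the hard constraint defining $\bar A$. Reconciling the profile width $\varepsilon_{n}$ with the grid scale $\varepsilon_{n}\delta_{n}$ while simultaneously truncating the infinitely long optimal profiles forced by the periodic potential with infinitely many wells is precisely the complication flagged in the introduction, and this is where I expect to spend the most care. A secondary difficulty is that the recovery sequence must remain inside $\bar A$, i.e.\ satisfy $\|\xi_{n}-\bar{\bar{v}}\|_{L^{1}(\reals)}\le\bar C$: when $\xi^{*}$ lies on the boundary of this constraint, the energies of the layers add a vanishing but positive amount of $L^{1}$ mass, so I would first approximate $\xi^{*}$ by step functions strictly inside $\bar A$ whose $E$-values converge to $E[\xi^{*}]$ and then diagonalize, using the lower semicontinuity of the $\Gamma$-$\limsup$.
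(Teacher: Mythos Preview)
Your proposal is correct, and the two halves take somewhat different routes from the paper's proof.

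For the $\liminf$ inequality, the paper argues \emph{locally}: it writes $\xi^{*}=\sum_{i}z_{i}\chi_{(t_{i-1},t_{i})}$, restricts the energy to a small interval around each jump $t_{i}$, chooses $\gamma$ so that $\xi_{n}(t_{i}\pm\gamma)\to z_{i},z_{i+1}$, and applies the AM--GM bound on each such interval to produce $2\int_{z_{i}}^{z_{i+1}}\sqrt{w}$. Your argument is global and shorter: the same AM--GM bound on all of $\reals$ gives $E_{\varepsilon_{n},\delta_{n}}[\xi_{n}]\ge\operatorname{Var}(p(\xi_{n}),\reals)$, Lipschitz continuity of $p$ transfers $L^{1}_{\mathrm{loc}}$ convergence, and lower semicontinuity of the variation finishes. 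This avoids the bookkeeping of selecting endpoints $t_{i}\pm\gamma$ where pointwise convergence holds, and it exploits directly the identity $\operatorname{Var}(p(\xi^{*}),\reals)=\bar p\,\operatorname{Var}(\xi^{*},\reals)$ that you highlight at the outset.

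For the recovery sequence the two approaches are close in spirit (rescaled equipartition profile, linear truncation to reach the wells in finite time, then piecewise-affine interpolation on $P_{\varepsilon_{n}\delta_{n}}$ with the Jensen estimate \eqref{ee48}), but the paper organizes the multi-unit jump differently: rather than truncating a single profile with an $O(\eta)$ error and diagonalizing, it builds the height-$K$ jump by concatenating $K$ horizontally shifted copies of the unit-jump recovery sequence, using the periodicity of $w$ to show the energies add exactly. This sidesteps the stalling issue you flag (the profile never actually crosses an intermediate integer) without any $\eta$-parameter, at the cost of a slightly fussier construction. Finally, you correctly identify the issue of keeping the recovery sequence inside the hard constraint $\|\xi_{n}-\bar{\bar v}\|_{L^{1}}\le\bar C$ when $\xi^{*}$ lies on its boundary; the paper does not address this point, and your proposed fix (approximate $\xi^{*}$ from the interior of $\bar A$ and diagonalize via lower semicontinuity of the $\Gamma$-$\limsup$) is the standard remedy.
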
  

\noindent \textit{Proof}. 

{\bf Step 1.}
We start by proving the liminf inequality.
Let $\{ \xi_{n}  \}$ be a sequence such that
$\xi_{n}\rightarrow \xi^{*}$ in $\bar{A}$.  If
$\liminf E_{\varepsilon_{n},\delta_{n}}[\xi_{n}]=\infty$, there is
nothing to prove.
Hence by passing
to a subsequence, we can assume that
$\lim E_{\varepsilon_{n_{k}},\delta_{n_{k}}}[\xi_{n_{k}}]
=
\liminf E_{\varepsilon_{n},\delta_{n}}[\xi_{n}]<\infty$.
Then $\xi_{n_{k}}\in A_{\varepsilon_{n_{k}},\delta_{n_{k}}}$ for all $k$, and 
by the preceding compactness result
we can further assume that $\xi_{n_{k}}\rightarrow \xi^{*}$
pointwise a.e.\@ on $\reals$ and that 
$\xi^{*}\in BV(\reals;\mathbb{Z})$.  
To simplify notation, we write just $\{ \xi_{n} \}$ to denote the
subsequence.

Parts of the next argument follow closely the notes \cite{leoni2013gamma}.
We write $\xi^{*}=\sum_{i=1}^{n+1}z_{i}\chi_{(t_{i-1},t_{i})}$ for
numbers $t_{1}<\cdots<t_{n}$ and integers $z_{i}$, with
$t_{0}=-\infty$ and $t_{n+1}=\infty$.
Let $\gamma$ be a small positive number.  We have
\begin{align*}
  E_{\varepsilon_{n},\delta_{n}}[\xi_{n}]
  &\geq
  \sum_{i=1}^{n}
  \left[ \varepsilon_{n}\int_{t_{i}-\gamma}^{t_{i}+\gamma}\xi_{n}^\prime(x)^{2}\,dx
  +
  \varepsilon_{n}^{-1}\int_{t_{i}-\gamma}^{t_{i}+\gamma}w\left(\xi_{n}(x)\right)\,dx
  \right].
\end{align*}
We consider the $i$th term in the sum on the right-hand side.  By a
change of variables we can assume that $t_{i}=0$.  By taking $\gamma$
smaller, we can assume that $\{ \xi_{n}(-\gamma) \}$ converges to
$z_{i}$ and that $\{ \xi_{n}(\gamma) \}$ converges to $z_{i+1}$.  Then
\begin{align*}
  \varepsilon_{n}\int_{-\gamma}^{\gamma}\xi_{n}^\prime(x)^{2}\,dx
  +
  \varepsilon_{n}^{-1}\int_{-\gamma}^{\gamma}w\left(\xi_{n}(x)\right)\,dx
  &\geq
  2\int_{-\gamma}^{\gamma}\sqrt{w\left(\xi_{n}(x)\right)}\xi_{n}^\prime(x)\,dx
  \nonumber \\
  &=
  2\int_{\xi_{n}(-\gamma)}^{\xi_{n}(\gamma)}\sqrt{w\left(s\right)}\,ds
  \rightarrow
  2\int_{z_{i}}^{z_{i+1}}\sqrt{w\left(s\right)}\,ds  
\end{align*}
as $n\rightarrow \infty$.  Furthermore, because $w$ has period 1,
\begin{equation*}
  2\int_{z_{i}}^{z_{i+1}}\sqrt{w\left(s\right)}\,ds
  =
  \sum_{j=0}^{z_{i+1}-z_{i}-1}
  2\int_{z_{i}+j}^{z_{i}+j+1}\sqrt{w\left(s\right)}\,ds
  =
  (z_{i+1}-z_{i})\bar{p},  
\end{equation*}
where we have assumed that $z_{i+1}>z_{i}$.  Therefore
\begin{equation*}
  \underset{n\rightarrow \infty}{\liminf}
  \sum_{i=1}^{n}
  \left[
  \varepsilon_{n}\int_{t_{i}-\gamma}^{t_{i}+\gamma}\xi_{n}^\prime(x)^{2}\,dx
  +
  \varepsilon_{n}^{-1}\int_{t_{i}-\gamma}^{t_{i}+\gamma}w\left(\xi_{n}(x)\right)\,dx
  \right]
  \geq
  \operatorname{Var}(\xi^{*},\reals)\bar{p}.
\end{equation*}
%



{\bf Step 2.}
We construct a recovery sequence for a typical element in
$\bar{A}$.
This construction proceeds in several steps.  First, we build a recovery
sequence for the unit step function.  With this basic case,
a recovery sequence is constructed for a piecewise constant
function having a single jump of size $K$ at a point $x_{0}$.  Lastly,
using the previous construction, we show how to connect recovery
sequences for isolated jumps, which allows us to build a recovery
sequence for any function $\xi^{*}\in BV(\reals;\mathbb{Z})$.

{\bf Step 2a.}
Let $H$ denote the unit step function.
Our initial goal is to construct a recovery sequence for $H$.
This sequence is more complicated than necessary for its immediate
purpose, because
later we use this construction as a building block for a 
recovery sequence for steps larger than 1 unit.
Let $\bar{\xi}$ be the solution to 
$\xi'=\sqrt{w\left(\xi\right)}$ with $\xi(0)=1/2$.
A straightforward argument shows that $\bar{\xi}-1/2$ is an odd
function because $w$ is even about 
$1/2$.  
Also, $\bar{\xi}$ is strictly increasing and 
$\bar{\xi}(x)\rightarrow  1$ as $x\rightarrow \infty$.

Define $\bar{\xi}_{n}(x)=\bar{\xi}(x/\varepsilon_{n})$.
Let $L_{n}^{+}$
denote the function whose graph is the line tangent to 
$\bar{\xi}_{n}$ at
$\left(\varepsilon_{n}^{1/2},\bar{\xi}_{n}(\varepsilon_{n}^{1/2})\right)$ and
let
$(\beta_{n},1)$ be the point where $L_{n}^{+}$
crosses the line $y=1$.  
Likewise, let
$L_{n}^{-}(x)$ be the function whose graph is the line tangent to 
$\bar{\xi}_{n}$ at
$\left(-\varepsilon_{n}^{1/2},\bar{\xi}_{n}(-\varepsilon_{n}^{1/2})\right)$.
By symmetry,
$L_{n}^{-}$ crosses the $y$-axis at
$(-\beta_{n},0)$.
Next we define $\hat{\xi}_{n}$ by
\begin{equation*}
  \hat{\xi}_{n}(x)
  :=
  \left\{
  \begin{array}{ll}
    0, & x\leq -\beta_{n}, \\[1mm]
    L_{n}^{-}(x), &
     -\beta_{n} < x \leq -\varepsilon_{n}^{1/2},\\[1mm]
    \bar{\xi}_{n}(x), &
    -\varepsilon_{n}^{1/2} < x \leq \varepsilon_{n}^{1/2},\\[1mm]
    L_{n}^{+}(x), &
    \varepsilon_{n}^{1/2} < x \leq \beta_{n},\\[1mm]
    1, &
    x > \beta_{n}.\\        
  \end{array}
  \right. 
\end{equation*}
See Figure~\ref{ff2}.
Now we let $\xi_{n}$ be the function that is piecewise affine on
$P_{\varepsilon_{n}\delta_{n}}$ such that
$\xi_{n}(i\varepsilon_{n}\delta_{n})=\hat{\xi}_{n}(i\varepsilon_{n}\delta_{n})$
for all $i\in \mathbb{Z}$.  We show that $\{ \xi_{n} \}$ is a recovery
sequence for $H$.

\begin{figure}[htb]
\centering
  \includegraphics[scale = 0.4, clip, trim=0in .35in 0in .25in]{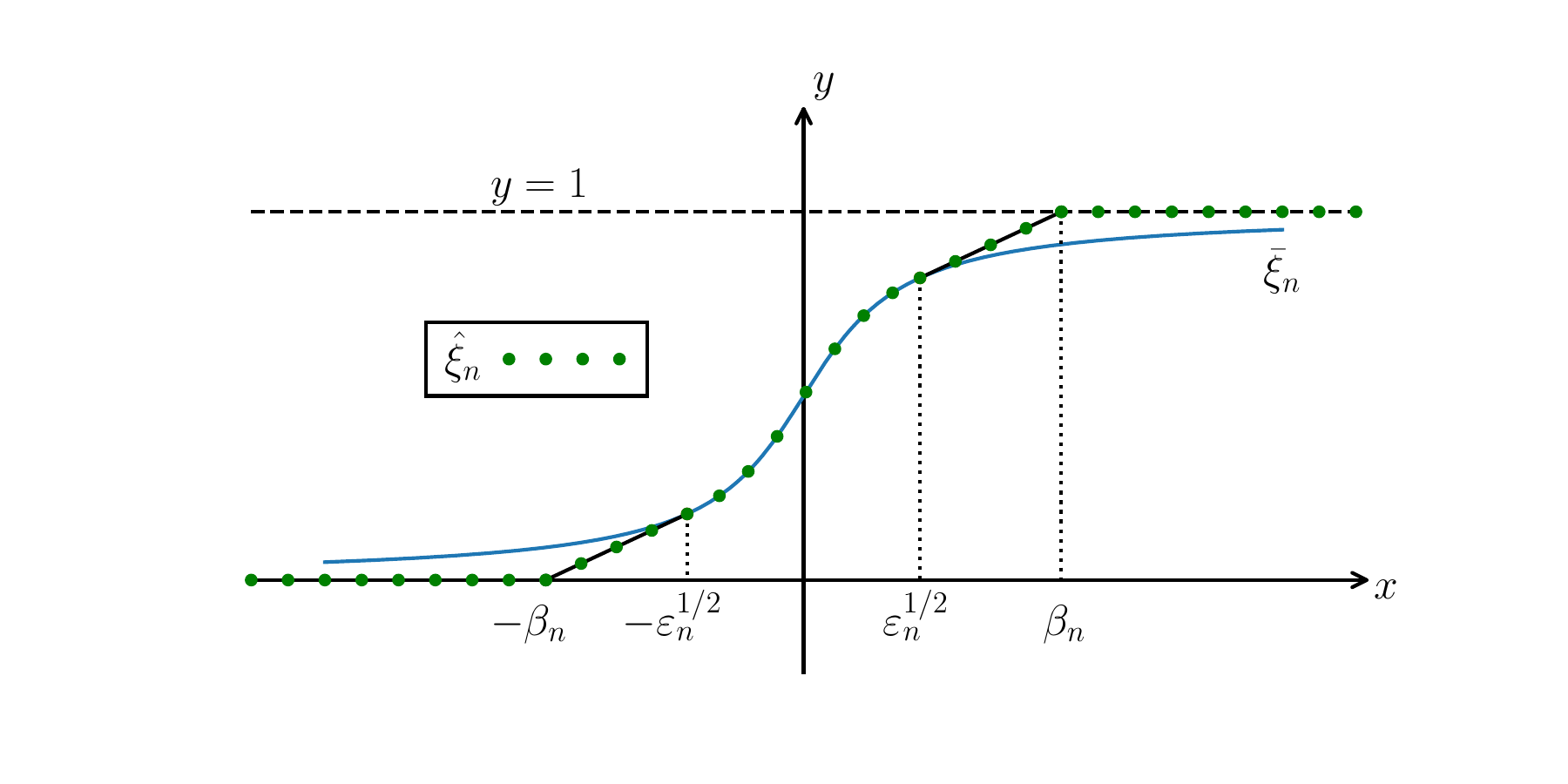}
  \caption{Construction of $\hat{\xi}_{n}$.}
  \label{ff2}
\end{figure}

We start by showing that
%
%
$\{ \xi_{n} \}$ converges to $H$ in $L^{1}_{\text{loc}}(\reals)$.
It is sufficient to show that $\beta_{n}\rightarrow 0$
as $n\rightarrow \infty$.
Because
$\left(\beta_{n},1\right)$ is on the tangent line,
$\beta_{n}
=
(1-\bar{\xi}_{n}(\varepsilon_{n}^{1/2}))/\bar{\xi}'_{n}(\varepsilon_{n}^{1/2})
+
\varepsilon_{n}^{1/2}$.  Next,
\begin{equation}
  \frac{1-\bar{\xi}_{n}(\varepsilon_{n}^{1/2})}{\bar{\xi}'_{n}(\varepsilon_{n}^{1/2})}
  =
  \frac{\varepsilon_{n}(1-\bar{\xi}(\varepsilon_{n}^{-1/2}))}
       {\bar{\xi}'(\varepsilon_{n}^{-1/2})} 
  =
  \frac{\varepsilon_{n}(1-\bar{\xi}(\varepsilon_{n}^{-1/2}))}
       { \sqrt{ w\big(\bar{\xi}(\varepsilon_{n}^{-1/2})\big)}}.
  \label{ee174}
\end{equation}
%
To estimate this, we look at
\begin{equation}
  w\big(\bar{\xi}(\varepsilon_{n}^{-1/2})\big)
  =
  \int_{\bar{\xi}(\varepsilon_{n}^{-1/2})}^{1}\!
  \int_{s}^{1}\! w''(t)\,dt\, ds
  \geq
  \frac{w^{\#}}{2}
  \left(
    1-\bar{\xi}(\varepsilon_{n}^{-1/2})
  \right)^{2},
  \label{ee175}
\end{equation}
where $w^{\#}$ is chosen so that $w''(\xi)\geq w^{\#}>0$ for $\xi$ in an interval
to the left of $1$.  Using
\eqref{ee175} in \eqref{ee174} gives
\begin{equation}
  \frac{1-\bar{\xi}_{n}(\varepsilon_{n}^{1/2})}{\bar{\xi}'_{n}(\varepsilon_{n}^{1/2})}
  \leq
  \frac{\sqrt{2}\varepsilon_{n}}{\sqrt{w^{\#}}}
  \rightarrow 0.
  \label{ee176}
\end{equation}

Now we consider the limiting behavior of
$E_{\varepsilon_{n},\delta_{n}}[\xi_{n}]$.
We note first that, because
$\bar{\xi}'=\sqrt{w\left(\bar{\xi}\right)}$, 
\begin{equation}
  \int_{-\infty}^{\infty}\!
  \left[
    \bar{\xi}'(y)^{2}
    +
    w\left(\bar{\xi}(y)\right)
  \right]
  \,dy 
  =
  \int_{-\infty}^{\infty}\!
  2\sqrt{w\left(\bar{\xi}(y)\right)}\bar{\xi}'(y)\,dy
  =
  \int_{0}^{1}\!
  2\sqrt{w(s)}\,ds
  =
  \bar{p}.
  \label{ee177}
\end{equation}
Next we show that
\begin{equation}
  \limsup \varepsilon_{n}
  \int_{-\infty}^{\infty}\!\xi_{n}'(x)^{2} \,dx
  \leq
  \int_{-\infty}^{\infty}\!\bar{\xi}'(y)^{2} \,dy.
  \label{ee182}
\end{equation}
The same argument as in \eqref{ee48} yields
\begin{equation*}
  \int_{i\varepsilon_{n}\delta_{n}}^{(i+1)\varepsilon_{n}\delta_{n}}\!
  \xi_{n}'(x)^{2}\,dx
  \leq
  \int_{i\varepsilon_{n}\delta_{n}}^{(i+1)\varepsilon_{n}\delta_{n}}\!
  \hat{\xi}_{n}'(x)^{2}\,dx,
\end{equation*}
whence
\begin{equation}
  \limsup \varepsilon_{n}
  \int_{-\infty}^{\infty}\!\xi_{n}'(x)^{2} \,dx
  \leq
  \limsup \varepsilon_{n}
  \int_{-\infty}^{\infty}\!
  \hat{\xi}_{n}'(x)^{2}\,dx.
  \label{ee184}
\end{equation}
The integral on the right-hand side of \eqref{ee184} equals
\begin{equation}
  \varepsilon_{n}\int_{-\beta_{n}}^{-\varepsilon_{n}^{1/2}}\!
  [\left(L_{n}^{-}\right)'\!(x)]^{2}\,dx
  +
  \varepsilon_{n}\int_{\varepsilon_{n}^{1/2}}^{\beta_{n}}\!
  [\left(L_{n}^{+}\right)'\!(x)]^{2}\,dx
  +
  \varepsilon_{n}\int_{-\varepsilon_{n}^{1/2}}^{\varepsilon_{n}^{1/2}}\!
  \bar{\xi}_{n}'(x)^{2}\,dx.
  \label{ee185}
\end{equation}
After making the change of variables
$y=x/\varepsilon_{n}$, it is easy to see that the last integral in
\eqref{ee185} converges to
$
\int_{-\infty}^{\infty}\!\bar{\xi}'(y)^{2} \,dy
$.
The second integral in \eqref{ee185} equals
\begin{equation*}
\begin{split}
  \varepsilon_{n}
  \left(\beta_{n}-\varepsilon_{n}^{^{1/2}}\right)
  \left(
   \bar{\xi}_{n}'(\varepsilon_{n}^{1/2})
  \right)^{2}
  &=
  \varepsilon_{n}
  \left(
   1-\bar{\xi}_{n}(\varepsilon_{n}^{1/2})
  \right)
  \bar{\xi}_{n}'(\varepsilon_{n}^{1/2})   \\
  &=
  \left(
   1-\bar{\xi}(\varepsilon_{n}^{-1/2})
  \right)  
  \bar{\xi}'(\varepsilon_{n}^{-1/2})
  \rightarrow 0.
\end{split}
%
\end{equation*}
The estimate for the first integral in \eqref{ee185} is similar.  
The inequality in \eqref{ee182} follows.

The limiting behavior of the interaction term in $E_{\varepsilon_{n},\delta_{n}}[\xi_{n}]$ is
considered next.  For this, we define $K_{n}\in \mathbb{Z}$ such that
$K_{n}\varepsilon_{n}\delta_{n}\leq \varepsilon_{n}^{1/2}<
(K_{n}+1)\varepsilon_{n}\delta_{n}$ and $J_{n}\in
\mathbb{Z}$ such that $J_{n}\varepsilon_{n}\delta_{n}\leq \beta_{n}<
(J_{n}+1)\varepsilon_{n}\delta_{n}$.
We show that 
\begin{equation}
  \varepsilon_{n}^{-1}
  \int_{-K_{n}\varepsilon_{n}\delta_{n}}^{K_{n}\varepsilon_{n}\delta_{n}}\!
    w\left(\xi_{n}(x)\right)\,dx
  \rightarrow 
  \int_{-\infty}^{\infty}\!
  w\left(\bar{\xi}(y)\right)\,dy
  \label{ee178}
\end{equation}
as $n\rightarrow \infty$.  Making the change of variables
$y=x/\varepsilon_{n}$ in the integral on the left in \eqref{ee178}
gives
$$
\int_{-K_{n}\delta_{n}}^{K_{n}\delta_{n}}\!
    w\left(\xi_{n}(\varepsilon_{n}y)\right)\,dy.$$
The function $y\mapsto\xi_{n}(\varepsilon_{n}y)$ is continuous and piecewise
affine on $P_{\delta_{n}}$, and
$\xi_{n}(\varepsilon_{n}i\delta_{n})=\bar{\xi}(i\delta_{n})$
for $|i|\leq K_{n}$.  
Hence for any $M$ such that $K_{n}\delta_{n}\geq M>0$,
$w\left(\xi_{n}(\varepsilon_{n}y)\right)$ converges uniformly
to $w\left(\bar{\xi}(y)\right)$ on $[-M,M]$.

Let $\mu>0$ and choose $M$ large enough so that 
$
\int_{M}^{\infty}\!
    w\left(\bar{\xi}(y)\right)dy<\mu
$.
(That $w\left(\bar{\xi}(y)\right)$ is integrable follows from \eqref{ee177}).
We know that
$\varepsilon_{n}^{1/2}<(K_{n}+1)\varepsilon_{n}\delta_{n}$, which
implies that 
$K_{n}\delta_{n}>\varepsilon_{n}^{-1/2}-\delta_{n}$.  Choose $N$ large
enough so that $n>N$ implies that $K_{n}\delta_{n}>M$.    
Then
\begin{equation}
\begin{split}
  \left|\int_{-\infty}^{\infty}\!\right.
  &
  w\left(\bar{\xi}(y)\right)\,dy
  -
  \left.
  \int_{-K_{n}\delta_{n}}^{K_{n}\delta_{n}}\!
  w\left(\xi_{n}(\varepsilon_{n}y)\right)\,dy
  \right| \\
  &\leq
  \int_{-M}^{M}\!
  \left|w\left(\bar{\xi}(y)\right)
  -
  w\left(\xi_{n}(\varepsilon_{n}y)\right)
  \right|\,dy
  +
  2\mu
  +
  2\left|
  \int_{M}^{K_{n}\delta_{n}}\!
  w\left(\xi_{n}(\varepsilon_{n}y)\right)\,dy
  \right|.  
\end{split}
  \label{ee179}  
\end{equation}
We need to estimate the last term on the right-hand side of
\eqref{ee179}.  
We can suppose that $M=\hat{K}_{n}\delta_{n}$ for some
$\hat{K}_{n}\in\mathbb{Z}$.  If $i\geq\hat{K}_{n}$ and 
$y\in [i\delta_{n},(i+1)\delta_{n}]$, then
$1/2
  \leq
  \xi_{n}(i\varepsilon_{n}\delta_{n})
  \leq
  \xi_{n}(\varepsilon_{n}y)
  \leq
1$,
which implies that
$w\left(\xi_{n}(\varepsilon_{n}y)\right)
  \leq
  w\left(\xi_{n}(i\varepsilon_{n}\delta_{n})\right)
  =
  w\left(\bar{\xi}(i\delta_{n})\right)
$.
Therefore
\begin{align}
  \int_{M}^{K_{n}\delta_{n}}\!
  w\left(\xi_{n}(\varepsilon_{n}y)\right)\,dy
  &=
  \sum_{i=\hat{K}_{n}}^{K_{n}-1}
  \int_{i\delta_{n}}^{(i+1)\delta_{n}}\!
  w\left(\xi_{n}(\varepsilon_{n}y)\right)\,dy
  \nonumber \\
  &\leq
  \sum_{i=\hat{K}_{n}}^{K_{n}-1}
  \int_{i\delta_{n}}^{(i+1)\delta_{n}}\!
  w\left(\bar{\xi}(i\delta_{n})\right)\,dy  \label{ee181}\\
  &\leq
  \sum_{i=\hat{K}_{n}}^{\infty}
  w\left(\bar{\xi}(i\delta_{n})\right)\delta_{n}
  \leq
  \int_{(\hat{K}_{n}-1)\delta_{n}}^{\infty}\!
  w\left(\bar{\xi}(y)\right)\,dy
  <
  \delta_{n}+\mu
  \nonumber
\end{align}
Returning to \eqref{ee179} and using \eqref{ee181}, we see that  
$\underset{n\rightarrow \infty}{\limsup}$ of the left-hand side of 
\eqref{ee179} is less than or equal to $4\mu$.  Since $\mu>0$ is
arbitrary, we conclude that \eqref{ee178} holds.

One more estimate establishes the limiting behavior of the interaction
term in $E_{\varepsilon_{n},\delta_{n}}[\xi_{n}]$.
Using that
$(J_{n}+1)\delta_{n}<\varepsilon_{n}^{-1}\beta_{n}+\delta_{n}$
and
$\varepsilon_{n}^{-1/2}<K_{n}\delta_{n}+\delta_{n}$,
we have 
\begin{align}
  \varepsilon_{n}^{-1}
  \int_{K_{n}\varepsilon_{n}\delta_{n}}^{\infty}\!
    w\left(\xi_{n}(x)\right)\,dx
  &=
  \varepsilon_{n}^{-1}
  \int_{K_{n}\varepsilon_{n}\delta_{n}}^{(J_{n}+1)\varepsilon_{n}\delta_{n}}\!
    w\left(\xi_{n}(x)\right)\,dx \nonumber \\
  &\leq
    \left[
      (J_{n}+1)\delta_{n}-K_{n}\delta_{n}
    \right]
    w\left(\xi_{n}(K_{n}\varepsilon_{n}\delta_{n}) \right) \nonumber \\[1mm]
  &\leq
    (\varepsilon_{n}^{-1}\beta_{n}-\varepsilon_{n}^{-1/2})
    w\left(\bar{\xi}(K_{n}\delta_{n})\right)
    +
    2\delta_{n}w\left(\bar{\xi}(K_{n}\delta_{n})\right)\label{ee187}\\[1mm]
  &=
  \frac{(1-\bar{\xi}(\varepsilon_{n}^{-1/2}))}
       { \sqrt{ w\big(\bar{\xi}(\varepsilon_{n}^{-1/2})\big)}}
       w\left(\bar{\xi}(K_{n}\delta_{n})\right)
       +
       2\delta_{n}w\left(\bar{\xi}(K_{n}\delta_{n})\right),
  \nonumber 
\end{align}
where the final equality uses \eqref{ee174}.  For the last line 
of \eqref{ee187}, we know that
$w\left(\bar{\xi}(K_{n}\delta_{n})\right)\rightarrow 0$ as
$n\rightarrow \infty$ and we know by \eqref{ee176}
that the remaining part of the first term is
bounded.  
A similar estimate
applies to
$\varepsilon_{n}^{-1}
  \int_{-\infty}^{-K_{n}\varepsilon_{n}\delta_{n}}\!
    w\left(\xi_{n}\right)\,dx$.
This completes the construction of a recovery sequence for $H$, the
unit step function.

{\bf Step 2b.}
We now use the sequence just constructed to build a 
recovery sequence 
for a jump of size
$K\in\mathbb{Z}$ at an arbitrary point $x_{0}\in\reals$.  To
illustrate, we construct a recovery sequence for $2H$, the
jump of size 2 at the origin.  
Let $\hat{\xi}_{n}$ and $\xi_{n}$ be defined as above.  Let
$\{ t_{n} \}$ be a sequence converging to 0 such that
$t_{n}=T_{n}\varepsilon_{n}\delta_{n}$ with $T_{n}\in\mathbb{Z}$ and
$t_{n}>2\beta_{n}+\varepsilon_{n}\delta_{n}$ for all $n$.  Set
$\lambda_{n}(x)=\xi_{n}(x)+\xi_{n}(x-t_{n})$.  We claim that
$\{ \lambda_{n}\}$ is a recovery sequence for $2H$.  
First we note that $0\leq \lambda_{n}(x)\leq 2$ for all $x$, that $\lambda_{n}(x)=0$
for $x\leq(-J_{n}-1)\varepsilon_{n}\delta_{n}\rightarrow 0$, and that $\lambda_{n}(x)=2$
for $x\geq(T_{n}+J_{n}+1)\varepsilon_{n}\delta_{n}\rightarrow 0$.  Hence
$\{ \lambda_{n} \}$ converges to $2H$ in $L^{1}_{\text{loc}}(\reals)$.


For the energy, we have
\begin{align*}
  E_{\varepsilon_{n},\delta_{n}}[\lambda_{n}]
  &=
  \int_{(-J_{n}-1)\varepsilon_{n}\delta_{n}}^{(J_{n}+1)\varepsilon_{n}\delta_{n}}\!
  \left[
    \varepsilon_{n}\lambda'_{n}(x)^{2}
    +
    \varepsilon_{n}^{-1}w(\lambda_{n}(x))
  \right]
  \,dx \nonumber \\
  &\phantom{mmmmmmm}
  +
  \int_{(T_{n}-J_{n}-1)\varepsilon_{n}\delta_{n}}^{(T_{n}+J_{n}+1)\varepsilon_{n}\delta_{n}}\!
  \left[
    \varepsilon_{n}\lambda'_{n}(x)^{2}
    +
    \varepsilon_{n}^{-1}w(\lambda_{n}(x))
  \right]
  \,dx  \nonumber \\[2mm]
  &=
  E_{\varepsilon_{n},\delta_{n}}[\xi_{n}]
  +
  \int_{(T_{n}-J_{n}-1)\varepsilon_{n}\delta_{n}}^{(T_{n}+J_{n}+1)\varepsilon_{n}\delta_{n}}\!
  \Big[
    \varepsilon_{n}
    \left[
    \left(
      \xi_{n}(x-t_{n})+1
    \right)'
    \right]^{2}
    \nonumber \\  
    &\phantom{mmmmmmmmmmmmmmn}
    +\,
    \varepsilon_{n}^{-1}
    w\left(\xi_{n}(x-t_{n})+1\right)
  \Big]
  \,dx  
  \\
  &=E_{\varepsilon_{n},\delta_{n}}[\xi_{n}]
  +
  \int_{(-J_{n}-1)\varepsilon_{n}\delta_{n}}^{(J_{n}+1)\varepsilon_{n}\delta_{n}}\!
  \left[
    \varepsilon_{n}\xi'_{n}(y)^{2}
    +
    \varepsilon_{n}^{-1}w(\xi_{n}(y))
  \right]
  \,dy \nonumber \\
  &=
  2E_{\varepsilon_{n},\delta_{n}}[\xi_{n}] \nonumber,  
\end{align*}
where for the third equality we use that $w$ has period 1 and we make
the change of variables $y=x-t_{n}$.  It follows that
$\limsup E_{\varepsilon_{n},\delta_{n}}[\lambda_{n}]\geq 2\bar{p}$.

Let $K$ be a positive integer.
Generalizing the previous construction,
we can build a recovery sequence for
a jump up from $0$ to
$K$ at $x=0$
by horizontally translating $K$ recovery sequences
each for a 
jump of size 1.  
Furthermore, the functions in that recovery sequence can be vertically
translated by $K'\in \mathbb{Z}$ to build a recovery sequence for a
jump up from $K'$ to $K'+K$ at $x=0$.
For a jump at a point $x_{0}$ different from 0,
we pick
$x_{n}\in P_{\varepsilon_{n}\delta_{n}}$ such that
$x_{n}\rightarrow x_{0}$.  If $\{ \xi_{n} \}$ is a recovery sequence for a
jump from $K'$ to $K'+K$ at $x=0$, then $\{ \xi_{n}(x-x_{n})  \}$ is a
recovery sequence for the corresponding jump at $x=x_{0}$.
(Assuming that $x_{n}\in P_{\varepsilon_{n}\delta_{n}}$ ensures that
$\xi_{n}(x-x_{n})\in A_{\varepsilon_{n}\delta_{n}}$.)

Let $K$ be a negative integer.
For a function with a jump down from $K'$ to $K'+K$ at $x=x_{0}$,
we build a recovery sequence by
repeating the above constructions but
starting with the solution to
$\xi'=-\sqrt{w\left(\xi\right)}$, $\xi(0)=1/2$.
It is easy to check that the key equality \eqref{ee177} holds
if $\xi'=-\sqrt{w\left(\xi\right)}$.

{\bf Step 2c.}
Lastly, we create a recovery sequence for an arbitrary function 
$\xi\in BV(\reals;\mathbb{Z})$.
There are numbers $t_{1}<t_{2}<\cdots<t_{m}$ and
integers $K_{1},\ldots,K_{m+1}$ such that
$\xi(x)=K_{1}+\sum_{j=1}^{m}(K_{j+1}-K_{j})H(x-t_{j})$.  
To define the typical function in our recovery sequence, near each
jump we use a function from the recovery sequence for that jump, as
constructed above.  On
the interval between 2 jumps, we set our typical function equal 
to the appropriate constant value.
Specifically, let $\{ \xi_{n}^{j} \}_{n}$ be the recovery sequence we constructed
above for
$K_{j}+(K_{j+1}-K_{j})H(x-t_{j})$, which is the jump from $K_{j}$ to
$K_{j+1}$ at $x=t_{j}$.
We know that
there are sequences $\{ \alpha_{n}^{j} \}_{n}$ and $\{ \beta_{n}^{j} \}_{n}$
with $\alpha_{n}^{j}, \beta_{n}^{j}\in P_{\varepsilon_{n}\delta_{n}}$
such that
\begin{equation*}
  \xi_{n}^{j}(x)
  =
  \begin{cases}
  K_{j} &   \text{for}\ x\leq \alpha_{n}^{j}  \\
  K_{j+1} & \text{for}\  x\geq \beta_{n}^{j}
  \end{cases}
  \quad \text{and} \quad
  \alpha_{n}^{j}\nearrow t_{j},\ \ 
  \beta_{n}^{j}\searrow t_{j}.  
\end{equation*}
We pick $N$ large enough such that $n>N$ implies that
$\alpha_{n}^{j}>\beta_{n}^{j-1}$ for $j=2,\ldots,m$.  Then for
$n>N$, we define
\begin{equation*}
  \xi_{n}(x)
  =
  \begin{cases}
  \xi_{n}^{j}(x) & \text{for}\ \alpha_{n}^{j}\leq x\leq \beta_{n}^{j},  \\
  \xi(x) & \text{otherwise}.
  \end{cases}
\end{equation*}
Because $\xi(x)=K_{j}\in \mathbb{Z}$ for
$\beta_{n}^{j-1}\leq x \leq \alpha_{n}^{j}$,
$E_{\varepsilon_{n}\delta_{n}}[\xi_{n}]=\sum_{j=1}^{m}E_{\varepsilon_{n}\delta_{n}}[\xi_{n}^{j}]$.
Hence
\begin{equation*}
  \limsup E_{\varepsilon_{n}\delta_{n}}[\xi_{n}]
  \leq
  \sum_{j=1}^{m}
  \limsup E_{\varepsilon_{n}\delta_{n}}[\xi_{n}^{j}]
  \leq
  \sum_{j=1}^{m}|K_{j+1}-K_{j}|\bar{p}
  =
  \operatorname{Var}(\xi,\reals)\bar{p}.
\end{equation*}
%
\hfill $\Box$

\section{Conclusion} 



Starting from a Frenkel-Kontorova-type model of an
infinitely long one-dimensional chain of atoms weakly interacting with
a substrate of fixed atoms,
we derive a rescaled model containing a small parameter
$\delta$ that measures the relative strengths of the weak interaction
and the elastic interaction.
We then apply a discrete-to-continuum approach, replacing discrete
displacements with piecewise affine functions to define continuum
versions of the discrete energies.
In Theorem~\ref{r1}, we
prove that these continuum energies $\Gamma$-converge as $\delta\rightarrow 0$.
We interpret this limiting process as a transition from the microscale
to a mesoscale at which a single
diffuse domain wall is observed.
We next introduce an additional rescaling $\varepsilon$,
and an associated limiting process, that converts our problem to the
macroscale.  In this case the limiting energy is finite for piecewise constant
functions of bounded variation.
For this limiting energy, each
point of discontinuity of a minimizer
corresponds to a sharp domain wall.  

We relate
Theorem~\ref{r1} in this paper to the modeling and numerical results in our
earlier paper \cite{espanol2017discrete}.  In that paper, we developed a discrete
model similar to but more general than the discrete model presented in
Section~\ref{s1}.  The model in \cite{espanol2017discrete} allows atoms on the
chain to deflect vertically as well as horizontally, so that the chain
can bend.  The corresponding energy includes an additional elastic
term that penalizes bending.  In \cite{espanol2017discrete}, we derive from the
discrete model a formal continuum limit based on a small geometric
parameter that measures the ratio of the atomic spacing to the lateral
extent of the system.  We then perform numerical simulations to
compare the predictions of the discrete and continuum models.  These
numerical results suggest that for the limiting process based on this
geometric parameter, atoms are homogenized and the chain is accurately
decribed by a continuous curve.

Furthermore, in \cite{espanol2017discrete} we use numerical simulations to
explore how varying the relative strengths of the elastic terms
compared to the strength of the interaction term in the energy affects
the structure of domain walls.  We observe that if elastic interactions are
relatively strong compared to the weak interactions, then the domain
walls are spatially diffuse rather than concentrated and, as a
consequence, the domain walls are composed of relatively many atoms.
If, on the other hand, the elastic interactions are relatively weak compared
to the weak interactions, then domain walls are sharp and are composed
of relatively few atoms.
In this paper, letting $\delta\rightarrow 0$ corresponds to the
limiting case of the ratio of the strength of the elastic term to the
strength of the interaction term going to infinity.
Hence we can view
Theorem~\ref{r1} above as 
providing a rigorous justification for the
continuum limit that was postulated in \cite{espanol2017discrete}, at least for
the special case of the model considered here.

%

%
%
%
%


\bibliography{gamma-converg-references}	

\end{document}